\newtheorem{theorem}{Theorem}
\newtheorem{lemma}{Lemma}
\newtheorem{definition}{Definition}
\newtheorem{example}{Example}
\newcommand{\mr}{\mathrm}
\newcommand{\mb}{\mathbf}
\newcommand{\mc}{\mathcal}
\begin{document}
\title{Symmetric Entropy Regions of Degrees Six and Seven} 

 \author{%
  \IEEEauthorblockN{Zihan Li, Shaocheng Liu, and Qi Chen}
 \IEEEauthorblockA{School of Telecommunications Engineering, Xidian University, China.  
                    \\
                  Email: \{lzih,lsc\}@stu.xidian.edu.cn, qichen@xidian.edu.cn
 }
}



\maketitle


\begin{abstract}
   In this paper, we classify all $G$-symmetric almost entropic regions according to their Shannon-tightness, that is, whether they can be fully characterized by Shannon-type inequalities, where $G$ is a permutation group of degree $6$ or $7$.
   
\end{abstract}

\section{Introduction}

Let $N=\{1,2,\cdots,n\}$ and $(X_i,i\in N)$ be a random vector indexed by $N$.
The set function $\mathbf{h}: 2^{N}\to \mathbb{R}$ defined by 
\begin{equation*}
	\mathbf{h}(A)=H(X_A), \quad A\subseteq N
\end{equation*}
is called the \emph{entropy function} of $(X_i,i\in N)$, 
The Euclidean space $\mc{H}_n \triangleq\ \mathbb{R}^{2^{N}}$ where entropy functions live in is called the \emph{entropy space} of degree $n$.
The set of all entropy functions, denoted by $\Gamma^*_n$, is called
the \emph{entropy region} of degree $n$,
and its closure $\overline{\Gamma_n^*}$, is called
the \emph{almost entropic region} of degree $n$ \cite{yeung2008information}.

It is known that for all $A$, $B \subseteq N$, an entropy function satisfies the following three polymatroidal axioms:
\begin{align}
\mathbf{h}(A)&\geq 0 \label{a1}\\
\mathbf{h}(A)&\leq \mathbf{h}(B) \ \text{if}\  A\subseteq B\label{a2}\\ 
\mathbf{h}(A)+\mathbf{h}(B)&\geq \mathbf{h}(A\cap B)+ \mathbf{h}(A \cup B)\label{a3}. 
\end{align}
That is, every entropy function $\mathbf{h}$ is (the rank function of) 
a polymatroid \cite{fujishige1978polymatroidal}. 
It has been proved that the basic information inequalities, 
i.e., the nonnegativity of Shannon information measures, are equivalent to the polymatroidal axioms. 
So the region $\Gamma_n$ bounded by Shannon-type information inequalities is also called  \emph{polymatroidal region} of degree $n$ and $\Gamma_n^* \subseteq \Gamma_n$.
It has been proved that $\Gamma_2^* = \Gamma_2$, $ \Gamma_3^* \subsetneq \Gamma_3$
but $ \overline{\Gamma_3^*} = \Gamma_3$\cite{zhang1997non,matus2005piecewise,chen2012extremeray}.
Due to the existence of infinitely many linear non-Shannon-type inequalities \cite{Zhang1998},\cite{Matus2007infinite}, when $n \geq 4$, 
$\overline{\Gamma_n^*}$ is a non-polyhedral convex cone with
$\overline{\Gamma_n^*} \subsetneq \Gamma_n$, and the full characterizations of 
$\Gamma_n^*$ and $\overline{\Gamma_n^*}$ are extremely difficult.

By imposing symmetric constraints on entropy regions, Chen and Yeung \cite{partition} introduced symmetric entropy functions, and they proved that for the closure of a partition-symmetric entropy region, it can be fully characterized by Shannon-type inequalities if and only if the partition is the 1-partition or a 2-partition with one of its blocks being a singleton. In \cite{Symmetries16ITW}, Apte, Chen and Walsh classifies all $G$-symmetric almost entropic regions according to whether they are equal to the  $G$-symmetric polymatroidal regions, where $G$ is a permutation group of degree $4$ or $5$.

In this paper, we do the same classification to the $G$-symmetric almost entropic regions when $G$ are permutation groups of degree $6$ or $7$.

The rest of this paper is organized as follows. Section \ref{pre} gives the preliminaries on 
orbit structures of permutation groups acting on $2^N$ and symmetric entropy functions. The main theorems of the paper about the classifications of the G-symmetric almost entropic regions according to whether they can be fully characterized by Shannon information inequalities are in Section \ref{results}.
\section{Preliminaries}
\label{pre}
In this section, we give brief preliminaries on  orbit structures of permutation groups acting on $2^N$, and the symmetries in the entropy space.
\subsection{Orbit structure}

Let $S_n$ be the symmetric group on $N$. Given a subgroup $G\leq S_n$,
for any $\sigma \in G $ and $A \subseteq N$, an action of G on $2^N$ is defined by
\begin{align}
\sigma(A)= \{\sigma(i):i \in A \}.
\end{align}
Let $\mathcal{O}_G(A)=\{\sigma(A):\sigma \in G \}$ be an \emph{orbit} of $G$ that contains $A$, and
$\textfrak{O}_G=\{\mathcal{O}_G(A):A \subseteq N\}$ be the set of all orbits of $G$ on $2^N$. 
We call $\textfrak{O}_G$ the \emph{orbit structure} of $G$. Each orbit structure owns a partial order defined below.

\begin{definition}
For $\mc{O}_1,\mc{O}_2  \in \textfrak{O}_G$, $\mc{O}_1 \leq \mc{O}_2$ if for any $A \in \mc{O}_1$, 
there exists $B \in \mc{O}_2$ such that $A \subseteq B$.
\end{definition}

The partial order among subgroups of a symmetric group yields a partial order among the orbit structures they induced.
\begin{definition}
	\label{def2}
  $\textfrak{O}_{G_2}$ is a refinement of $\textfrak{O}_{G_1}$
  if for $G_1, G_2 \leq S_n$, $\textfrak{O}_{G_2} \leq \textfrak{O}_{G_1}$.
\end{definition}

\begin{example}
	Let $G_1 = S_2\mathrm{wr}_3S_3=\langle(123456),(16)(34)\rangle$\footnote{$S_2\mathrm{wr}_3S_3$ is the wreath product of $S_2$ and $S_3$, see the definition of wreath product in \textnormal{\cite[Section 2.6]{dixon1996permutation}},}. 
	$G_2=\langle(123)(456),(1563)(24)\rangle$. The Hasse diagram of the orbit structures of $G_1$ and $G_2$ are depicted in Figs. \ref{subfig:c} and \ref{subfig:e} respectively.
    By Definition \ref{def2}, we can see that $\textfrak{O}_{\langle(123)(456),(1563)(24)\rangle} \leq \textfrak{O}_{S_2\mathrm{wr}_3S_3}$.

\begin{figure}[h]
	\centering
	\begin{subfigure}{0.45\linewidth}
		\includegraphics[width=\linewidth]{orbit3}
		\caption{$\textfrak{O}_{S_2\mathrm{wr}_3S_3}$}
		\label{subfig:c}
	\end{subfigure}
	\begin{subfigure}{0.45\linewidth}
		\includegraphics[width=\linewidth]{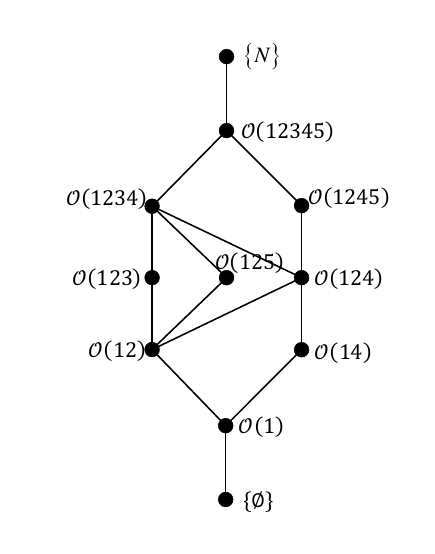}
		\caption{$\textfrak{O}_{\langle(123)(456),(1563)(24)\rangle}$}
		\label{subfig:e}
	\end{subfigure}
	\caption{Orbit structures}
    \label{orbit}
\end{figure}
\end{example}

\subsection{Symmetries in the entropy space}

The group action of a permutation group $G \leq S_n$ on $2^N$ induces a group action on $\mc{H}_n$, that is
\begin{align}
\sigma(\mb{h})(A)=\mb{h}(\sigma(A))
\end{align}
for any $\sigma \in G$ and $\mb{h} \in \mc{H}_n$. Let $\mathrm{fix}_G$ be the subspace of $\mc{H}_n$ fixed by the action, i.e.,
\begin{align}
	\label{a6}
\mathrm{fix}_G=\{\mb{h}\in \mc{H}_n:\mb{h}(A)=\mb{h}(B)\ \text{if} \  A, B \in \mc{O}, \mc{O} \in \textfrak{O}_G\}.
\end{align}

It is defined in \cite{partition} that $G$-symmetric entropy region
\begin{align}
\Psi_G^* = \Gamma_n^* \cap \mathrm{fix}_G
\end{align}
and its outer bound, the $G$-symmetric polymatroidal region
\begin{align}
\Psi_G = \Gamma_n \cap \mathrm{fix}_G.
\end{align}

For self-containment, in the following, we list the three theorems about the relations between $G$-symmetric entropy regions and their outer bound $G$-symmetric polymatroidal regions in \cite{partition} and \cite{Symmetries16ITW}, which will facilitate our proofs in the rest of this paper.

\begin{theorem}\textnormal{\cite[Thm.1]{partition}}
	\label{lem1}
	Let $p=\{N_1, N_2, \cdots , N_t\}$ be a partition of $N$, that is, ${N_i}$, $i=1,2,\cdots ,t$ are disjoint and $\bigcup\limits_{i=1}^{t} N_i=N$. Let $G_p=S_{N_1}\times S_{N_2}\times \cdots \times S_{N_t}$, where $S_{N_i}$ are symmetric groups on ${N_i}$. For $\lvert N \rvert \geq 4$, 
	\begin{align}
		\overline{\Psi_{G_p}^*}=\Psi_{G_p}\notag
	\end{align}
	if and only if $p=\{N\}$ or $\{\{i\},N\textbackslash\{i\}\}$ for some $i \in N$.
\end{theorem}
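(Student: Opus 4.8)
The plan is to prove the two inclusions separately, exploiting that both $\Psi_{G_p}$ and $\overline{\Psi_{G_p}^{*}}$ are closed convex cones. The inclusion $\overline{\Psi_{G_p}^{*}}\subseteq\Psi_{G_p}$ is immediate: from $\Gamma_n^{*}\subseteq\Gamma_n$ we get $\Psi_{G_p}^{*}\subseteq\Psi_{G_p}$, and $\Psi_{G_p}=\Gamma_n\cap\mathrm{fix}_{G_p}$ is closed, so it contains the closure. The substance is the reverse inclusion $\Psi_{G_p}\subseteq\overline{\Psi_{G_p}^{*}}$ in the two special cases, together with its failure otherwise. The main enabling fact I would establish first is the group-averaging identity $\overline{\Psi_{G_p}^{*}}=\overline{\Gamma_n^{*}}\cap\mathrm{fix}_{G_p}$: the inclusion $\subseteq$ is clear, and for $\supseteq$ one averages an approximating sequence of entropy functions over $G_p$, using the convexity of $\overline{\Gamma_n^{*}}$ and its invariance under coordinate permutations. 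Consequently $\overline{\Psi_{G_p}^{*}}$ is a closed convex cone, and since $\Psi_{G_p}$ is polyhedral, $\Psi_{G_p}\subseteq\overline{\Psi_{G_p}^{*}}$ holds iff every extreme ray of $\Psi_{G_p}$ lies in $\overline{\Gamma_n^{*}}$.

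For the ``if'' direction I would treat the two partitions by reducing to extreme rays. When $p=\{N\}$, $\mathrm{fix}_{S_n}$ consists of the cardinality functions $\mathbf{h}(A)=f(\lvert A\rvert)$, and the polymatroidal axioms translate into $f(0)=0$ together with monotonicity and concavity of $f$. The extreme rays of this cone are the functions $\mathbf{h}_r(A)=\min(\lvert A\rvert,r)$ for $r=1,\dots,n$, which are exactly the rank functions of the uniform matroids $U_{r,n}$; these are representable over a sufficiently large field, hence proportional to entropy functions, and being cardinality functions they lie in $\mathrm{fix}_{S_n}$, so each extreme ray belongs to $\overline{\Psi_{S_n}^{*}}$. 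For $p=\{\{i\},N\setminus\{i\}\}$ the group is $S_{N\setminus\{i\}}$ and the orbits are indexed by the pair (whether $i\in A$, and $\lvert A\setminus\{i\}\rvert$), giving a low-dimensional cone; I would enumerate its extreme rays and exhibit each as a representable-matroid (hence entropic) rank function, the distinguished element $i$ accounting for the extra rays beyond the uniform ones. Closedness and convexity of $\overline{\Psi_{G_p}^{*}}$ then upgrade ``all extreme rays are almost entropic'' to $\Psi_{G_p}\subseteq\overline{\Psi_{G_p}^{*}}$.

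For the ``only if'' direction I would first prove a monotonicity lemma: if $p'$ refines $p$ (equivalently $G_{p'}\leq G_p$ and $\mathrm{fix}_{G_p}\subseteq\mathrm{fix}_{G_{p'}}$) and $\overline{\Psi_{G_p}^{*}}\subsetneq\Psi_{G_p}$ is witnessed by some $\mathbf{h}^{\circ}\in\Psi_{G_p}\setminus\overline{\Gamma_n^{*}}$, then the same $\mathbf{h}^{\circ}$ witnesses $\overline{\Psi_{G_{p'}}^{*}}\subsetneq\Psi_{G_{p'}}$, since $\mathbf{h}^{\circ}\in\Gamma_n\cap\mathrm{fix}_{G_{p'}}=\Psi_{G_{p'}}$ while $\mathbf{h}^{\circ}\notin\overline{\Gamma_n^{*}}\cap\mathrm{fix}_{G_{p'}}=\overline{\Psi_{G_{p'}}^{*}}$. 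Since every non-special partition is a refinement of a two-block partition $q=\{N_1,N_2\}$ with $\lvert N_1\rvert,\lvert N_2\rvert\geq 2$ (merge the blocks into two groups, each of total size at least $2$, which is always possible when $n\geq 4$), it suffices to refute equality for such $q$. For these I would pick two elements from each block, apply the Zhang--Yeung non-Shannon inequality \cite{Zhang1998} to the four chosen variables, and average it over $G_q$; this yields a valid inequality for $\overline{\Gamma_n^{*}}$, hence for $\overline{\Psi_{G_q}^{*}}$, and I would produce a $G_q$-symmetric polymatroid in $\Psi_{G_q}$ violating it, certifying $\overline{\Psi_{G_q}^{*}}\subsetneq\Psi_{G_q}$.

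The main obstacle is this last step. Because the Zhang--Yeung inequality is symmetric in its pair of auxiliary variables but asymmetric in the other two, averaging over $G_q$ can in principle collapse it into a Shannon consequence --- and indeed this is exactly what happens for the excluded partitions $\{N\}$ and $\{\{i\},N\setminus\{i\}\}$, which is the phenomenon the theorem isolates. I would therefore have to place the four variables in $N_1,N_2$ so that enough asymmetry survives the averaging, and confirm non-triviality concretely by computing an explicit symmetric polymatroid (for instance a suitably symmetrized V\'amos-type rank function) that violates the averaged inequality. A secondary difficulty is completing the extreme-ray enumeration for the $\{\{i\},N\setminus\{i\}\}$ case and verifying representability of each ray, which is finite but delicate because the distinguished element breaks the full symmetry.
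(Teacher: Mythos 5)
This statement is quoted verbatim from the reference \cite{partition} (it is Theorem~1 of Chen and Yeung), and the present paper offers no proof of it at all --- it is listed among the preliminaries and used as a black box. So there is no in-paper proof to compare against; what follows assesses your proposal against the argument in the cited source, whose architecture yours essentially reproduces: the reduction ``Shannon-tightness $\Leftrightarrow$ every extreme ray of $\Psi_{G_p}$ is almost entropic,'' the identity $\overline{\Psi_G^*}=\overline{\Gamma_n^*}\cap\mathrm{fix}_G$, the refinement-monotonicity lemma (which is exactly Theorem~2 of \cite{Symmetries16ITW}, restated in this paper as Theorem~2), and a Zhang--Yeung violation for the ``only if'' direction. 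That skeleton is correct, including the coarsening argument showing every non-special partition refines a two-block partition with both blocks of size at least $2$.

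The gaps are precisely the two steps you defer, and they are the mathematical content of the theorem rather than routine verifications. First, for the ``only if'' direction you never exhibit the $G_q$-symmetric polymatroid violating Zhang--Yeung for $q=\{N_1,N_2\}$ with $|N_1|,|N_2|\ge 2$; your own worry that group-averaging the inequality could collapse it to a Shannon consequence is a sign the argument is not yet closed. (A cleaner route, and the one used both in \cite{partition} and in this paper's Theorems~4 and~5, is to skip the averaging entirely: a single symmetric polymatroid violating one \emph{instance} of Zhang--Yeung already certifies non-entropicity --- see the explicit rank functions $\mathbf{h_2},\mathbf{h_3}$ in the proof of Theorem~4, whose minors on four elements violate ZY.) Second, for $p=\{\{i\},N\setminus\{i\}\}$ an ``enumeration of extreme rays'' must be carried out uniformly in $n$, since the theorem is asserted for all $|N|\ge 4$; this requires a general characterization of the extreme rays of that cone and a representability proof for each family, none of which is sketched. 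A minor further point: your derivation of $\overline{\Psi_G^*}=\overline{\Gamma_n^*}\cap\mathrm{fix}_G$ by averaging is circular as written (the average of approximants lands in $\overline{\Gamma_n^*}\cap\mathrm{fix}_G$, the set you are trying to bound); the standard fix is to use the \emph{sum} $\sum_{\sigma\in G}\sigma(\mathbf{h}_k)$ of entropy functions of independent copies, which is genuinely entropic and $G$-fixed, and then rescale inside the closed cone $\overline{\Psi_G^*}$.
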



 \begin{theorem}\textnormal{\cite[Thm.2]{Symmetries16ITW}}
\label{lem2}
For $G_1, G_2 \leq S_n$ with $\textfrak{O}_{G_1} \leq \textfrak{O}_{G_2}$,

\begin{enumerate}
  \item[1)] if $\overline{\Psi_{G_1}^*}=\Psi_{G_1} $, then $\overline{\Psi_{G_2}^*}=\Psi_{G_2} $;
  \item[2)] if $\overline{\Psi_{G_2}^*}\subsetneq \Psi_{G_2} $, then $\overline{\Psi_{G_1}^*}\subsetneq \Psi_{G_1} $.
\end{enumerate}

\end{theorem}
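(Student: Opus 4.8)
The plan is to reduce both claims to a single symmetrization lemma plus a short set computation, after first noticing that 2) is nothing but the contrapositive of 1). Indeed, since $\Gamma_n^*\subseteq\Gamma_n$ and $\mathrm{fix}_G$ is a closed subspace, $\Psi_G=\Gamma_n\cap\mathrm{fix}_G$ is closed and contains $\Psi_G^*$, so $\overline{\Psi_G^*}\subseteq\Psi_G$ for every $G$. Thus for each group exactly one of $\overline{\Psi_G^*}=\Psi_G$ and $\overline{\Psi_G^*}\subsetneq\Psi_G$ holds, and 2) is logically equivalent to the contrapositive of 1). It therefore suffices to prove 1).

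First I would convert the hypothesis $\textfrak{O}_{G_1}\le\textfrak{O}_{G_2}$ into an inclusion of fixed subspaces. By Definition \ref{def2}, $\textfrak{O}_{G_1}$ refines $\textfrak{O}_{G_2}$, so every $G_1$-orbit is contained in some $G_2$-orbit. Hence any $\mathbf{h}$ that is constant on each $G_2$-orbit is automatically constant on each of the smaller $G_1$-orbits, which by the definition \eqref{a6} of $\mathrm{fix}_G$ gives $\mathrm{fix}_{G_2}\subseteq\mathrm{fix}_{G_1}$ and consequently $\Psi_{G_2}\subseteq\Psi_{G_1}$ and $\Psi_{G_2}^*\subseteq\Psi_{G_1}^*$.

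The analytic heart is the symmetrization lemma $\overline{\Psi_G^*}=\overline{\Gamma_n^*}\cap\mathrm{fix}_G$ for every finite $G\le S_n$; the inclusion $\subseteq$ is immediate, and I would establish $\supseteq$ by averaging over the group. For $\mathbf{g}\in\Gamma_n^*$ realized by $(X_i)_{i\in N}$, each permuted vector $\sigma\mathbf{g}$, given by $\sigma\mathbf{g}(A)=\mathbf{g}(\sigma(A))$, is realized by $(X_{\sigma(i)})_{i\in N}$ and hence is again entropic, while $\Gamma_n^*$ is closed under addition through independent coupling of the realizations. The full group sum $\sum_{\sigma\in G}\sigma\mathbf{g}$ is therefore entropic, and it is fixed by every $\tau\in G$ because $\tau$ merely permutes the summands; thus $\sum_{\sigma\in G}\sigma\mathbf{g}\in\Gamma_n^*\cap\mathrm{fix}_G=\Psi_G^*$, and scaling inside the convex cone $\overline{\Psi_G^*}$ places the average $\frac{1}{|G|}\sum_{\sigma\in G}\sigma\mathbf{g}$ in $\overline{\Psi_G^*}$. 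For a general $\mathbf{h}\in\overline{\Gamma_n^*}\cap\mathrm{fix}_G$ I would write $\mathbf{h}=\lim_k\mathbf{g}_k$ with $\mathbf{g}_k\in\Gamma_n^*$; the averaging operator is linear, continuous, and fixes $\mathrm{fix}_G$ pointwise, so $\frac{1}{|G|}\sum_{\sigma\in G}\sigma\mathbf{g}_k\to\mathbf{h}$, placing $\mathbf{h}$ in the closed set $\overline{\Psi_G^*}$.

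Granting the lemma, 1) is a one-line computation: assuming $\overline{\Psi_{G_1}^*}=\Psi_{G_1}$, the lemma rewrites this as $\overline{\Gamma_n^*}\cap\mathrm{fix}_{G_1}=\Gamma_n\cap\mathrm{fix}_{G_1}$; intersecting both sides with $\mathrm{fix}_{G_2}$ and using $\mathrm{fix}_{G_2}\subseteq\mathrm{fix}_{G_1}$ yields $\overline{\Gamma_n^*}\cap\mathrm{fix}_{G_2}=\Gamma_n\cap\mathrm{fix}_{G_2}$, i.e.\ $\overline{\Psi_{G_2}^*}=\Psi_{G_2}$ by the lemma again. I expect the main obstacle to be the lemma itself, above all the verification that the group-averaged vector is simultaneously almost entropic and $G$-symmetric; by contrast the order-theoretic translation of the second paragraph and the closing set manipulation are routine once $\mathrm{fix}_{G_2}\subseteq\mathrm{fix}_{G_1}$ is in hand.
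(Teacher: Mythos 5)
The paper itself does not prove this statement---it is imported verbatim as Theorem~2 of \cite{Symmetries16ITW}---so there is no in-paper proof to compare against. Judged on its own, your argument is the standard symmetrization proof and is essentially correct: reducing 2) to the contrapositive of 1) is legitimate because $\Psi_G=\Gamma_n\cap\mathrm{fix}_G$ is closed and contains $\Psi_G^*$, so $\overline{\Psi_G^*}\subseteq\Psi_G$ always and ``$\neq$'' coincides with ``$\subsetneq$''; your reading of $\textfrak{O}_{G_1}\leq\textfrak{O}_{G_2}$ as giving $\mathrm{fix}_{G_2}\subseteq\mathrm{fix}_{G_1}$ is the right one (the coarser orbit structure imposes more equalities, hence the smaller fixed subspace---consistent with the paper's use of the theorem, where tightness propagates upward from minimal classes and non-tightness downward from maximal ones); and once the identity $\overline{\Psi_G^*}=\overline{\Gamma_n^*}\cap\mathrm{fix}_G$ is available, the concluding intersection computation is exactly right.

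The one step you should not wave at is the passage from $\sum_{\sigma\in G}\sigma\mathbf{g}\in\Psi_G^*$ to $\frac{1}{|G|}\sum_{\sigma\in G}\sigma\mathbf{g}\in\overline{\Psi_G^*}$ by ``scaling inside the convex cone $\overline{\Psi_G^*}$.'' That $\overline{\Psi_G^*}$ is a convex cone is true but not free: $\Psi_G^*$ is closed under addition (independent coupling) but not under multiplication by $1/|G|$, so you need the time-sharing/dilution construction behind the convexity of $\overline{\Gamma_n^*}$ together with the observation that this construction stays inside $\mathrm{fix}_G$, because the binary time-sharing variable contributes the same overhead $h_b(\lambda)$ to every nonempty set and hence preserves $G$-symmetry. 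This is exactly the fact established by Chen and Yeung in \cite{partition} for partition symmetry and it carries over verbatim to arbitrary $G$, but it should be cited or proved rather than asserted. With that point filled in, your proof is complete and, as far as one can tell, follows the same route as the original proof in \cite{Symmetries16ITW}.
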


%
%
 
In the following theorem, $C_n$ and $D_n$ are cyclic group and dihedral group  of degree $n$, respectively. 
\begin{theorem}\textnormal{\cite[Thm.5]{Symmetries16ITW}}
\label{lem3}
For $n \geq 6$, $\overline{\Psi_{C_n}^*}\subsetneq \Psi_{C_n} $, 
$\overline{\Psi_{D_n}^*}\subsetneq \Psi_{D_n} $,
$\overline{\Psi_{S_1\times C_{n-1}}^*}\subsetneq \Psi_{S_1\times C_{n-1}} $,
$\overline{\Psi_{S_1\times D_{n-1}}^*}\subsetneq \Psi_{S_1\times D_{n-1}} $.
\end{theorem}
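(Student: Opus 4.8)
The inclusion $\overline{\Psi_G^*}\subseteq\Psi_G$ holds for every $G$, since $\overline{\Gamma_n^*}\subseteq\Gamma_n$ and both intersecting with $\mathrm{fix}_G$ and taking closures preserve inclusions; so the entire content of the statement is strictness. My first move is to cut the four claims down to two by the orbit-refinement monotonicity of Theorem \ref{lem2}. Because $C_n\leq D_n$ and $S_1\times C_{n-1}\leq S_1\times D_{n-1}$ as permutation groups, the induced orbit structures satisfy $\textfrak{O}_{C_n}\leq\textfrak{O}_{D_n}$ and $\textfrak{O}_{S_1\times C_{n-1}}\leq\textfrak{O}_{S_1\times D_{n-1}}$. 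Thus by part 2) of Theorem \ref{lem2} it suffices to prove the two dihedral cases $\overline{\Psi_{D_n}^*}\subsetneq\Psi_{D_n}$ and $\overline{\Psi_{S_1\times D_{n-1}}^*}\subsetneq\Psi_{S_1\times D_{n-1}}$, and the cyclic cases then follow for free.

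For each dihedral group $G$, the engine I would use is marginalization. Fix four distinct moved indices $a,b,c,d\in N$ (lying in the block on which $G$ acts as a dihedral group) chosen in \emph{general position}, i.e.\ with pairwise distinct cyclic distances, and let $\pi:\mc{H}_n\to\mc{H}_4$ be the coordinate projection keeping only the entries indexed by subsets of $\{a,b,c,d\}$. Marginalizing a random vector again gives a random vector, so $\pi$ maps $\Gamma_n^*$ into $\Gamma_4^*$ and hence $\overline{\Psi_G^*}$ into $\overline{\Gamma_4^*}$; similarly $\pi$ maps $\Gamma_n$ into $\Gamma_4$, because a restriction of a polymatroid is a polymatroid. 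The plan is to show that, by contrast, $\pi$ carries the symmetric polymatroidal region $\Psi_G$ \emph{onto} all of $\Gamma_4$. Granting this, choose a four-variable polymatroid $\mb{g}\in\Gamma_4\setminus\overline{\Gamma_4^*}$ — such $\mb{g}$ exists because the Zhang--Yeung inequality is a valid non-Shannon inequality, so $\overline{\Gamma_4^*}\subsetneq\Gamma_4$ — and pull it back to some $\mb{h}\in\Psi_G$ with $\pi(\mb{h})=\mb{g}$. If $\mb{h}$ lay in $\overline{\Psi_G^*}$, then $\mb{g}=\pi(\mb{h})$ would lie in $\overline{\Gamma_4^*}$, a contradiction; hence $\mb{h}\in\Psi_G\setminus\overline{\Psi_G^*}$ and strictness follows.

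The main obstacle is exactly the surjectivity claim $\pi(\Psi_G)=\Gamma_4$: given an arbitrary four-variable polymatroid $\mb{g}$, I must build a genuinely $G$-symmetric $n$-polymatroid whose restriction to $\{a,b,c,d\}$ equals $\mb{g}$, uniformly for all $n\geq 6$. The difficulty is that $G$-symmetry forces the value on every subset to be shared across its entire orbit, so the freely chosen entries of $\mb{g}$ get copied onto many overlapping subsets of $N$, and one must still verify all polymatroidal axioms \eqref{a1}--\eqref{a3} (equivalently, the elemental Shannon inequalities) for the completed function. This is where the hypothesis $n\geq 6$ enters: only when there are enough points outside $\{a,b,c,d\}$ can the four indices be placed with all distinct mutual distances, so that their orbit relations impose no hidden linear equality that would drag $\mb{g}$ back into the Shannon-describable part of $\Gamma_4$; for $n=4,5$ the unavoidable collisions make the construction fail, consistent with the stated threshold. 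An equivalent route, bypassing the explicit extension, is to symmetrize the Zhang--Yeung inequality over $G$ and exhibit a $G$-symmetric, V\'amos-type polymatroid violating the symmetrized inequality; the same general-position bookkeeping is then needed to guarantee that the symmetrized inequality does not degenerate into a Shannon-type one.
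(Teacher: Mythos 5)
Your opening reduction is sound: since $C_n\leq D_n$ and $S_1\times C_{n-1}\leq S_1\times D_{n-1}$, the orbit structures satisfy $\textfrak{O}_{C_n}\leq\textfrak{O}_{D_n}$ and $\textfrak{O}_{S_1\times C_{n-1}}\leq\textfrak{O}_{S_1\times D_{n-1}}$, so part 2) of Theorem \ref{lem2} does cut the four claims down to the two dihedral ones. (Note also that Theorem \ref{lem3} is quoted from \cite{Symmetries16ITW} and is not proved in this paper, so the only available comparison is with the strategy the paper uses for its analogous claims, e.g.\ $G=S_3\mathrm{wr}_2C_2$, $S_2\mathrm{wr}_3S_3$, $S_1\times\mathrm{PSL}_2(5)$, $\mathrm{PGL}_3(2)$.)

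The dihedral half, however, rests entirely on the surjectivity claim $\pi(\Psi_{D_n})=\Gamma_4$, which you do not prove and which is false precisely for the small $n$ at stake. Your ``general position'' hypothesis requires four indices on the $n$-cycle whose six pairwise cyclic distances are pairwise distinct; but the cyclic distance takes at most $\lfloor n/2\rfloor$ values, so for $n=6$ or $7$ only three values are available for six pairs and collisions are unavoidable. Whenever two pairs $\{a,b\}$ and $\{c,d\}$ lie in the same $D_n$-orbit, every $\mb{h}\in\Psi_{D_n}$ satisfies $\mb{h}(\{a,b\})=\mb{h}(\{c,d\})$, so $\pi(\Psi_{D_n})$ is confined to a proper linear subspace of $\mc{H}_4$ and cannot be all of $\Gamma_4$ (the same collisions occur for triples). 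The weaker statement that would still suffice, namely $\pi(\Psi_{D_n})\not\subseteq\overline{\Gamma_4^*}$, is exactly the content of the theorem and is nowhere established; even for large $n$ you would still have to show that the $D_n$-symmetric completion of an arbitrary $\mb{g}\in\Gamma_4$ can be made polymatroidal, which you acknowledge but do not do. What is actually needed is a concrete witness: an explicit $D_n$-symmetric polymatroid --- in practice an extreme ray of $\Psi_{D_n}$ obtained from its H-representation via $\mathsf{lrs}$ --- whose restriction/contraction to four elements violates the Zhang--Yeung inequality. That is the pattern this paper follows in all the cases it does prove, and your closing remark about symmetrizing Zhang--Yeung points in the right direction, but without exhibiting such a polymatroid uniformly in $n\geq 6$ the proof has a genuine gap.
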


\section{Symmetric entropy functions of degree six and seven}
\label{results}

By definition, $\Psi_G^*$ and $\Psi_G$ are uniquely determined by the orbit structures $\textfrak{O}_{G}$. Moreover, if two permutation groups are in the same conjugacy class, their orbit structures are isomorphic up to the permutations of the indices in $N$. Therefore, permutation groups can be considered equivalent if their orbit structures are isomorphic. To determine whether $\overline{\Psi_G^*}=\Psi_G$, we only need to consider one representative in each equivalence class. For  $G \leq S_n$, $n = 6$, $7$, by $\mathsf{GAP}$\cite{GAP}, we see that there exsit $56$ and $96$ conjugacy classes, respectively. According to their orbit structure, they can be further classified into $35$ and $51$ equivalence classes. By Definition \ref{def2}, these equivalence classes form two partialy ordered sets $P_6$ and $P_7$, whose Hasse diagrams are depicted in Figures \ref{fig4}
and \ref{fig5} in Appendix B, respectively.

By Theorem \ref{lem2}, to determine the Shannon tightness of $\overline{\Psi_G^*}$, we only need to consider the antichains of all minimal equivalence classes in $P_6$ and $P_7$ such that  $\overline{\Psi_G^*}=\Psi_G$ with $\textfrak{O}_{G}$ in such an equivalence class, and maximal equivalence classes in $P_6$ and $P_7$ such that $\overline{\Psi_G^*}\subsetneq \Psi_G $ with $\textfrak{O}_{G}$ in such an equivalence class. We call these equivalence classes critical and they are characterized in Theorems \ref{Thm4} and \ref{Thm5}, and depicted in Figures \ref{fig2} and \ref{fig3}, respectively.

\subsection{Symmetric entropy functions of $6$ random variables}

\begin{theorem}\label{Thm4}
For any $G\leq S_6$, $\overline{\Psi_G^*}=\Psi_G$ if and only if $G =$ 
\begin{itemize}
	\item $S_6:\langle(123456),(12)\rangle$, \\  $A_6:\langle(12345),(456)\rangle$, \\  $ \mathrm{PGL}_2(5):\langle(123456),(16)(23)\rangle$\footnote{We adopt the notations of permutation groups in \cite{dixon1996permutation} in this paper. The generating element representation of these groups can be found in Figs. \ref{fig2} and \ref{fig3}.},
	\item $\mathrm{PSL}_2(5):\langle(123)(456),(14623)\rangle,$
	\item $S_1 \times S_5:\langle(23456),(23)\rangle$, \\  $ S_1 \times A_5:\langle(23456),(234)\rangle$,  \\ $ S_1 \times \mathrm{AGL}_1(5):\langle(23456),(2546)\rangle$.
\end{itemize}
\end{theorem}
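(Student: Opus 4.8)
The plan is to turn the apparently infinite classification problem into a finite one by exploiting the monotonicity of Shannon-tightness along the orbit-structure order (Theorem \ref{lem2}), and then to settle the finitely many \emph{critical} equivalence classes by two separate techniques: an extreme-ray analysis of the symmetric polymatroidal cone for the tight classes, and violation of a known non-Shannon-type inequality for the non-tight ones. Concretely, I would first lay out the $35$ orbit-structure equivalence classes of $S_6$ as the Hasse diagram $P_6$ (Figure \ref{fig4}). Since tightness propagates \emph{upward} (from finer to coarser orbit structures) by Theorem \ref{lem2}(1) and non-tightness propagates \emph{downward} by Theorem \ref{lem2}(2), it suffices to prove tightness for the minimal tight classes and non-tightness for the maximal non-tight classes; everything else then follows automatically, and the claimed list of groups is recovered from the critical classes drawn in Figure \ref{fig2}.

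For the non-tight direction, Theorem \ref{lem1} immediately disposes of every Young subgroup $G_p$: it is tight only for $p=\{N\}$ (the class of $S_6$) and $p=\{\{i\},N\setminus\{i\}\}$ (the class of $S_1\times S_5$), and non-tight for all other partitions, e.g. $S_2\times S_4$, $S_3\times S_3$, and $S_1\times S_1\times S_4$. Theorem \ref{lem3} disposes of $C_6$, $D_6$, $S_1\times C_5$ and $S_1\times D_5$. The classes not covered by Theorems \ref{lem1}--\ref{lem3} are the imprimitive transitive groups (the wreath-type groups such as $S_3\,\mathrm{wr}\,S_2$ and $S_2\,\mathrm{wr}_3 S_3$) and a handful of intransitive groups; for each maximal such class I would exhibit an explicit $\mathbf h\in\Psi_G$ with $\mathbf h\notin\overline{\Psi_G^*}$, obtained by symmetrizing the Zhang--Yeung inequality (valid on $\overline{\Gamma_6^*}\supseteq\overline{\Psi_G^*}$) and locating a point of the symmetric polymatroidal cone that violates it. A final pass over $P_6$ confirms that these maximal non-tight classes form an antichain lying immediately below the claimed tight set, so that no non-listed class sits above a listed tight one.

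For the tight direction, by monotonicity it suffices to treat the minimal tight classes, namely $\mathrm{PSL}_2(5)$, $S_1\times A_5$ and $S_1\times\mathrm{AGL}_1(5)$; the remaining tight groups $S_6$ and $S_1\times S_5$ come directly from Theorem \ref{lem1}, while $A_6$ and $\mathrm{PGL}_2(5)$ sit above $\mathrm{PSL}_2(5)$ and therefore follow by Theorem \ref{lem2}(1). For each minimal class $G$ I would compute the extreme rays of the polyhedral cone $\Psi_G=\Gamma_6\cap\mathrm{fix}_G$ in the low-dimensional space $\mathrm{fix}_G$, using orbit representatives of the Shannon facets to keep the computation manageable. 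Because $\overline{\Psi_G^*}$ is a closed convex cone with $\overline{\Psi_G^*}\subseteq\Psi_G$, once every extreme ray is shown to be almost entropic the cone $\Psi_G$, being the conic hull of its extreme rays, satisfies $\Psi_G\subseteq\overline{\Psi_G^*}\subseteq\Psi_G$ and hence $\overline{\Psi_G^*}=\Psi_G$; here I would use that $\sigma(\mathbf h)$ is entropic whenever $\mathbf h$ is, so averaging approximating sequences over $G$ keeps them inside $\mathrm{fix}_G$. The real content is realizing each extreme ray as an (almost) entropy function, which I would attempt through representable-matroid and linear constructions together with group-characterizable distributions, with the large $G$-symmetry cutting the number of inequivalent rays to a short list.

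The main obstacle I anticipate is precisely this realization step for $\mathrm{PSL}_2(5)$ and $S_1\times\mathrm{AGL}_1(5)$. The modular and matroidal extreme rays are easy, but the delicate rays are those that are Shannon-extremal yet not obviously entropic, where one must produce an explicit convergent sequence of random vectors. That $\mathrm{PSL}_2(5)$ and $\mathrm{AGL}_1(5)$ are tight while their subgroups $C_6,D_6$ and $C_5,D_5$ (as $S_1\times C_5$, $S_1\times D_5$) are \emph{not} shows that the tightness boundary is genuinely subtle, so these realizations cannot be purely formal: they must exploit the extra symmetry present in $\mathrm{PSL}_2(5)$ and $\mathrm{AGL}_1(5)$ but absent in their cyclic and dihedral subgroups.
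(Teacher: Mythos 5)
Your proposal follows essentially the same route as the paper: reduce to the critical (minimal tight / maximal non-tight) equivalence classes in $P_6$ via Theorem \ref{lem2}, dispatch the Young subgroups and cyclic/dihedral cases by Theorems \ref{lem1} and \ref{lem3}, certify the remaining maximal non-tight classes ($S_2\mathrm{wr}_3S_3$, $S_3\mathrm{wr}_2C_2$, etc.) by exhibiting extreme rays whose minors violate the Zhang--Yeung inequality, and certify $\mathrm{PSL}_2(5)$ by computing the extreme rays of $\Psi_G$ and realizing the non-uniform ones as (multi-)linearly representable polymatroids. The only simplification you miss is that $S_1\times A_5$ and $S_1\times\mathrm{AGL}_1(5)$ have the same orbit structure as $S_1\times S_5$ and so follow directly from Theorem \ref{lem1} without any extreme-ray computation, leaving $\mathrm{PSL}_2(5)$ as the sole class needing an explicit representability argument.
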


\begin{proof}
%
To prove the theorem, by Theorem \ref{Lem2}, we only need to determine the minimal equivalence classes of subgroups of $S_6$ in $P_6$ such that $\overline{\Psi_G^*}=\Psi_G$ and the maximal equivalence classes of subgroups of $S_6$ in $P_6$ such that $\overline{\Psi_G^*} \subsetneq \Psi_G$. According to the Hasse diagram of $P_6$ depicted in Figure \ref{fig4}, we will prove in the following that these two families of equivalence classes of subgroups are
\begin{itemize}
	\item $\mathrm{PSL}_2(5):\langle (123)(456),(14623) \rangle$,
	\item $S_1 \times S_5:\langle (23456),(23) \rangle$,  \\ $ S_1 \times A_5:\langle (23456),(234) \rangle$ , \\  $S_1 \times \mathrm{AGL}_1(5):\langle (23456),(2546) \rangle$,
\end{itemize}
and
\begin{itemize}
	\item  $S_2 \times S_4:\langle (12)(3456),(34) \rangle $, \\  $S_2 \times A_4: \langle (12)(345)$, $(34)(56) \rangle$, \\  $ \langle (12)(3456),(12)(36) \rangle $,
	\item  $S_3\times S_3:\langle (123)(56),(23)(456) \rangle $, \\  $S_3 \times A_3: \langle (123),(12),(456) \rangle$, \\  $A_3 \times A_3: \langle (123)$, $(456)\rangle, \\ \langle (123)(456), (13)(46), (23)(46) \rangle $,
	\item  $S_1 \times C_5:\langle (23456) \rangle $, \\  $S_1\times D_5:\langle (23456),(36)(45) \rangle $,
	\item $S_2\mathrm{wr}_3S_3:\langle (123456),(16)(34)\rangle$, \\ $\langle (123)(456),(13)(24)(56) \rangle$,
	\item $S_3\mathrm{wr}_2C_2:\langle (12)(34)(56),(14)(235) \rangle$,  \\ $A_6 \cap S_3\mathrm{wr}_2C_2:\langle (12)(3456),(16)(23) \rangle$ , \\  $3^2 \cdot 2^2:\langle (123456),(15)(26) \rangle$, \\  $C_3\mathrm{wr}_2C_2:\langle (123456),(153)(246) \rangle$.
\end{itemize}
respectively. The orbit structures of them are depicted in Appendix A.
%
\begin{figure*}[htp]
	\centering
	\includegraphics[width=1.0\linewidth,height=0.5\linewidth]{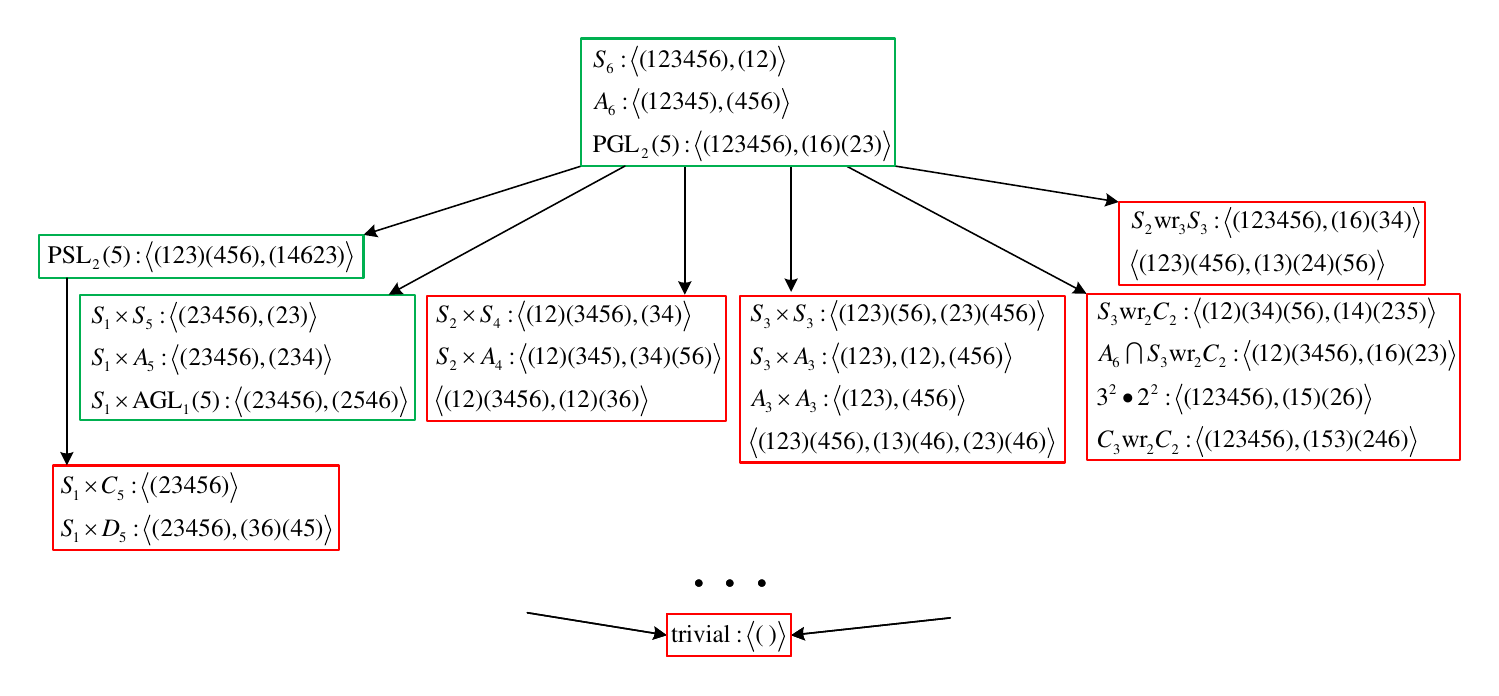}
	\captionsetup{font=small}
	\caption{The Hasse diagram of critical equivalence classes of $P_6$. 
		The "green" nodes indicate $\overline{\Psi_G^*} = \Psi_G $, 
		and "red" nodes indicate $\overline{\Psi_G^*}\subsetneq \Psi_G $.}
	\label{fig2}
\end{figure*}
\begin{spacing}{1.2}
\noindent\underline{$\overline{\Psi_G^*} = \Psi_G$ for $G=\mathrm{PSL}_2(5)$:} 
\end{spacing}
Reducing the redundancies in \eqref{a1}-\eqref{a3}, inequalities determining $\Gamma_n$ are in the following elemental form \textnormal{\cite[Section 14.1]{yeung2008information}} 
\begin{enumerate}
	\item $\mathbf{h}(N) \geq \mathbf{h}(N\textbackslash\{i\})$,  $i \in N$
	\item $\mathbf{h}(i \cup K) + \mathbf{h}(j \cup K)\geq \mathbf{h}(K)+\mathbf{h}(\{i,j\}\cup K)$,  distinct  $i,j \in N$, $K\subseteq N\textbackslash \{i,j\}  $,
\end{enumerate}
with each corresponds to a facet of $\Gamma_n$. According to the orbit structure of $G$, $\overline{\Psi_G^*} $ and $ \Psi_G$ live in a subspace $\mathbb{R}^{\textfrak{O}_G}$ of $\mc{H}_n$, which is indexed by $(s_{\mathcal{O}}:\mathcal{O} \in \textfrak{O}_G)$. Imposing the symmetric constraints in \eqref{a6}, we have the elemental form of the inequalities bounding $\Psi_G$, i.e., the H-representation \cite{ziegler2012lectures} of $\Psi_G$. Input these inequalities into  $\mathsf{lrs}$\cite{Lrs}, we have the V-representation of $\Psi_G$, i.e., the extreme rays of $\Psi_G$. The H-representation and V-representation of $\Psi_G$ can be found in Appendix A.

To prove $\overline{\Psi_G^*} = \Psi_G $, it is sufficient to show that all the extreme rays of $\Psi_G$ are almost entropic. Among the $8$ extreme rays of $\Psi_G$, $6$ are those containing uniform matroids $U_{i,n}$, $i=1,2,\cdots,6$, which are representative and so almost entropic. The remaining $2$ extreme rays are symmetric by index permutation, so we only need to show one of them is almost entropic. Consider the one containing polymatroid 
$$\mathbf{h_1}(A)=\left\{
\begin{aligned}
	2 \quad& \text{if}\ |A| = 1,\\
	4 \quad& \text{if}\ |A| = 2 ,\\
	5 \quad& \text{if}\ A \in \mathcal{O}(123),\\
	6 \quad& \text{otherwise}.
\end{aligned}
\right.
$$
It can be shown that it is multi-linear representable over $\mathrm{GF}(11)$. The following is a $6 \times 12$ matrix over $\mathrm{GF}(11)$, and each $i \in N$ labels a $6 \times 2$ submatrix. It can be checked that it represents $\mathbf{h_1}$, that is, for each subset $A \subseteq N = \{1,2,\cdots,6\}$, the rank of the $6 \times 2|A|$ submatrix whose columns are labeled by $A$ is equal to $\mathbf{h_1}(A)$.

%


\[
\begin{bNiceArray}{@{}cccccccc@{}}[first-row]
\label{matrix1}
	&1  &2 &3 &4 &5 &6 &\\
	&1\  0 &0\  0&0\  1&0\  0 &0\  1&4\  6 & \\
	&0\  1 &0\  0&1\  0&0\  0 &0\  1&2\  3 & \\
	&0\  0 &1\  0&0\  1&0\  0 &0\  1&7\  2 & \\
	&0\  0 &0\  1&0\  1&0\  0 &1\  0&9\  1 & \\
	&0\  0 &0\  0&0\  0&1\  0 &1\  0&9\  9 & \\
	&0\  0 &0\  0&1\  0&0\  1 &0\  0&0\  1 & \\
	\CodeAfter
	\tikz \draw [dashed, line width=0.6pt, black] (1-|3) -- (7-|3) ;
	\tikz \draw [dashed, line width=0.6pt, black] (1-|4) -- (7-|4) ;
	\tikz \draw [dashed, line width=0.6pt, black] (1-|5) -- (7-|5) ;
	\tikz \draw [dashed, line width=0.6pt, black] (1-|6) -- (7-|6) ;
	\tikz \draw [dashed, line width=0.6pt, black] (1-|7) -- (7-|7) ;
\end{bNiceArray}
\] 
\begin{spacing}{1.1}
\noindent\underline{$\overline{\Psi_G^*} =\Psi_G$ for $G=S_1\times S_5$:} 
\end{spacing}
By Theorem \ref{lem1}, $\overline{\Psi_G^*} =\Psi_G$ for $G=S_1\times S_5$. As $S_1\times A_5$ and $S_1\times \mathrm{AGL}_1(5)$ have the same orbit structures with $S_1\times S_5$, $\overline{\Psi_G^*} =\Psi_G$ for $G=S_1\times A_5$, $\mathrm{AGL}_1(5)$.

\begin{spacing}{1.2}
\noindent\underline{$\overline{\Psi_G^*} \neq \Psi_G$ for $G=S_2\times S_4$, $S_3\times S_3$, $S_1\times C_5$:} 
\end{spacing}
It is immediately implied  by Theorems \ref{lem1} and \ref{lem3}.

\begin{spacing}{1.1}
\noindent\underline{$\overline{\Psi_G^*} \neq \Psi_G$ for $G=S_3\mathrm{wr}_2C_2$:} 
\end{spacing}
By a similar treatment for $\mathrm{PSL}_2(5)$, we obtain the H-representation and V-representation of $\Psi_G$, $G=S_3\mathrm{wr}_2C_2$, which can be find in Appendix A. Among the extreme ray of $\Psi_G$, there is one containing the following polymatroid
$$\mathbf{h_2}(A)=\left\{
\begin{aligned}
	2 \quad& \text{if}\ |A| = 1,\\
	4 \quad& \text{if}\ |A| = 2 ,\\
	5 \quad& \text{if}\ A \in \mathcal{O}(124),\\
	6 \quad& \text{otherwise}.
\end{aligned}
\right.
$$
By contracting $\{6\}$ from $\mathbf{h_2}$, then restricting it on $\{1,2,3,4\}$, we have
$$ \widetilde{\mathbf{h_2}}(A)=\left\{
\begin{aligned}
2 \quad& \text{if}\ |A| = 1, \\
3 \quad& \text{if}\ |A| = 2 \ \text{and} \ |A| \neq \{1,4\},\\
4 \quad& \text{if}\ |A| = \{1,4\}\ \text{or}\ |A| \geq 3,
\end{aligned}
\right.
$$
which violates Zhang-Yeung inequality\cite{Zhang1998}, and is therefore non-entropic. Thus $\overline{\Psi_{G}^*} \neq \Psi_{G}$.

\noindent\underline{$\overline{\Psi_G^*} \neq \Psi_G$ for $G=S_2\mathrm{wr}_3S_3$:} 

Among the extreme rays of $\Psi_G$, one finds the polymatroid given by
$$\mathbf{h_3}(A)=\left\{
\begin{aligned}
	2 \quad& \text{if}\ |A| = 1,\\
	3 \quad& \text{if}\ A \in \mathcal{O}(12),\\
	4 \quad& \text{otherwise}.
\end{aligned}
\right.
$$
By restricting the polymatroid on \{2,3,4,5\}, we have
$$ \widetilde{\mathbf{h_3}}(A)=\left\{
\begin{aligned}
2 \quad& \text{if}\ |A| = 1, \\
3 \quad& \text{if}\ |A| = 2 \ \text{and} \ A \neq \{2,5\},\\
4 \quad& \text{if}\ A = \{2,5\}\ \text{or}\ |A| \geq 3,
\end{aligned}
\right.
$$
which violates Zhang-Yeung inequality, and is therefore non-entropic. Thus $\overline{\Psi_{G}^*} \neq \Psi_{G}$.
\end{proof}

\subsection{Symmetric entropy functions of $7$ random variables}

%
%
\begin{lemma}
	\label{Lem2}
	For symmetric group $S_{N_1}$ and $S_{N_2}$, with $N_1$ and $N_2$ disjoint, let $G_1\leq S_{N_1}$ and $G_2\leq S_{N_2}$, then $\overline{\Psi_{G_1}^*} \neq \Psi_{G_1} $ implies  $
	\overline{\Psi_{G_1 \times G_2}^*} \neq \Psi_{G_1 \times G_2}$.
\end{lemma}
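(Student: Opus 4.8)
The plan is to reduce the statement for $G_1 \times G_2$ to the failure of Shannon-tightness for $G_1$ by exhibiting a non-entropic extreme ray of $\Psi_{G_1 \times G_2}$. The hypothesis gives us a polymatroid $\mathbf{g}$ that is an extreme ray of $\Psi_{G_1}$ living in $\mathcal{H}_{N_1} = \mathbb{R}^{2^{N_1}}$ but is not almost entropic (since $\overline{\Psi_{G_1}^*} \neq \Psi_{G_1}$ means $\Psi_{G_1}$ has an extreme ray outside $\overline{\Psi_{G_1}^*}$, and being a closed convex cone, such a ray must witness the strict containment). The idea is to lift $\mathbf{g}$ to a polymatroid $\mathbf{h}$ on the full ground set $N = N_1 \cup N_2$ in a way that (i) keeps it $(G_1 \times G_2)$-symmetric, (ii) makes it an extreme ray of $\Psi_{G_1 \times G_2}$, and (iii) preserves non-entropicity via a marginalization that recovers $\mathbf{g}$.

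**The natural lift** is to make the variables indexed by $N_2$ deterministic functions that add no new entropy once $N_1$ is known, or more cleanly, to set $\mathbf{h}(A) = \mathbf{g}(A \cap N_1) + c\,|A \cap N_2|$ for a suitable scaling, though the cleanest construction is the one making $X_{N_2}$ trivial (constant) so that $\mathbf{h}(A) = \mathbf{g}(A \cap N_1)$. First I would verify this $\mathbf{h}$ is a polymatroid: since restriction to a coordinate subset and adding a modular term both preserve the polymatroid axioms \eqref{a1}--\eqref{a3}, this is routine. Next I would check $(G_1 \times G_2)$-symmetry: any $\sigma = \sigma_1 \sigma_2$ with $\sigma_1 \in G_1$, $\sigma_2 \in G_2$ fixes $N_1$ and $N_2$ setwise, so $\sigma(A) \cap N_1 = \sigma_1(A \cap N_1)$, and $G_1$-symmetry of $\mathbf{g}$ then gives $\mathbf{h}(\sigma(A)) = \mathbf{h}(A)$; hence $\mathbf{h} \in \mathrm{fix}_{G_1 \times G_2}$ and $\mathbf{h} \in \Psi_{G_1 \times G_2}$.

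**The crux is the non-entropicity transfer together with the extreme-ray property.** For non-entropicity: the operation recovering $\mathbf{g}$ from $\mathbf{h}$ is restriction to $N_1$ (marginalization), which is exactly the reverse of the lift, so $\mathbf{h}|_{N_1} = \mathbf{g}$. Since marginalization maps entropic functions to entropic functions and is continuous (so it maps almost-entropic to almost-entropic), if $\mathbf{h}$ were in $\overline{\Psi_{G_1 \times G_2}^*} \subseteq \overline{\Gamma_N^*}$ then $\mathbf{g} = \mathbf{h}|_{N_1}$ would lie in $\overline{\Gamma_{N_1}^*}$, and being $G_1$-symmetric it would lie in $\overline{\Psi_{G_1}^*}$, contradicting the choice of $\mathbf{g}$ as a non-almost-entropic point of $\Psi_{G_1}$. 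Thus $\mathbf{h} \in \Psi_{G_1 \times G_2} \setminus \overline{\Psi_{G_1 \times G_2}^*}$, which immediately yields $\overline{\Psi_{G_1 \times G_2}^*} \neq \Psi_{G_1 \times G_2}$.

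**I expect the main obstacle to be the clean justification that** the witness $\mathbf{g}$ can be taken to be an extreme ray of $\Psi_{G_1}$ and that the lift produces a genuine boundary point rather than an interior point whose non-entropicity is ambiguous — but in fact the argument above sidesteps extreme rays entirely: it suffices to exhibit \emph{any} point $\mathbf{g} \in \Psi_{G_1} \setminus \overline{\Psi_{G_1}^*}$, which exists precisely because $\overline{\Psi_{G_1}^*} \subsetneq \Psi_{G_1}$, and to track it through the two continuous, symmetry-respecting, entropy-preserving maps (lift and marginalization). The only genuinely delicate point is confirming that marginalization to $N_1$ carries $\overline{\Psi_{G_1 \times G_2}^*}$ into $\overline{\Psi_{G_1}^*}$; this follows because marginalization sends $\Gamma_N^*$ into $\Gamma_{N_1}^*$, commutes with the $G_1$-action after quotienting out $G_2$, and is continuous, so it sends the closure $\overline{\Psi_{G_1 \times G_2}^*}$ into $\overline{\Psi_{G_1}^*}$. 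With that established, the contrapositive chain closes the proof.
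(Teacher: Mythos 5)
Your proposal is correct and follows essentially the same route as the paper's proof: both lift a witness $\mathbf{g}\in\Psi_{G_1}\setminus\overline{\Psi_{G_1}^*}$ to $N_1\cup N_2$ by adding loops on $N_2$ (i.e., $\mathbf{h}(A)=\mathbf{g}(A\cap N_1)$) and conclude non-almost-entropicity because marginalization back to $N_1$ recovers $\mathbf{g}$. The only cosmetic difference is that the paper also adds a lifted element $\mathbf{h_2}'$ from $\Psi_{G_2}$, which your version effectively takes to be zero, and you spell out the marginalization step that the paper leaves implicit.
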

\begin{proof}
	As $\overline{\Psi_{G_1}^*} \neq \Psi_{G_1}$, there exsits $\mb{h_1} \in \Psi_{G_1}$
	such that $\mb{h_1} \notin \overline{\Psi_{G_1}^*} = \overline{\Gamma_{n_1}^*} \cap \mr{fix}_{G_1}$.
	Since $\mb{h_1} \in \Psi_{G_1} = \Gamma_{n_1} \cap \mr{fix}_{G_1}$, 
	$\mb{h_1} \in \mr{fix}_{G_1} $ which implies $\mb{h_1} \notin \overline{\Gamma_{n_1}^*}$.
	By adding $|N_2|$ loops to  $\mb{h_1}$, we obtain $\mb{h_1}'$ on  $N\triangleq N_1\cup N_2$ with its rank function
	\begin{align}
		\mb{h_1}'(A) = \mb{h_1}(A \cap N_1),\quad \forall A \subseteq N. \notag 
	\end{align}
	For any $\mb{h_2}\in \Psi_{G_2}$, we obtain $\mb{h_2}'$ similarly. Let   $\mb{h}=\mb{h_1'} +\mb{h_2'}$, then   $\mb{h} \in \Psi_{G_1 \times G_2}$ but  $\mb{h} \notin \overline{\Psi_{G_1 \times G_2}^*}$. Therefore $\overline{\Psi_{G_1 \times G_2}^*} \neq \Psi_{G_1 \times G_2}$, which proves the lemma.
\end{proof}
\begin{theorem}\label{Thm5}
	For any $G\leq S_7$, $\overline{\Psi_G^*}=\Psi_G$ if and only if $G =$
	\begin{itemize}
	\item $S_7:\langle (1234567),(12) \rangle, \\ \ A_7:\langle (1234567),(123) \rangle$,
	\item $S_1 \times S_6:\langle (234567),(23) \rangle, \\ \ S_1 \times A_6:\langle (234567),$ $(567) \rangle, \\ \ \langle (234567),(27)(34) \rangle$,
	\item $\mathrm{AGL}_1(7):\langle (1234567),(163247) \rangle$.
\end{itemize}
\end{theorem}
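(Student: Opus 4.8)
The plan is to mirror exactly the strategy used in the proof of Theorem \ref{Thm4}. By Theorem \ref{lem2}, the Shannon-tightness property propagates up and fails downward along the refinement order on orbit structures, so it suffices to locate the \emph{critical} equivalence classes in $P_7$: the minimal classes for which $\overline{\Psi_G^*}=\Psi_G$ and the maximal classes for which $\overline{\Psi_G^*}\subsetneq\Psi_G$. Reading off the Hasse diagram of $P_7$ in Figure \ref{fig5}, I would identify the antichain of candidate critical classes and then certify each one by the two opposing techniques below. The listed groups $S_7,A_7,S_1\times S_6,S_1\times A_6,\langle(234567),(27)(34)\rangle$, and $\mathrm{AGL}_1(7)$ are precisely the minimal tight classes together with everything above them in $P_7$.

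For the tightness direction ($\overline{\Psi_G^*}=\Psi_G$), I would proceed group by group. The two symmetric-type cases $S_1\times S_6$ (and its orbit-equivalent partners $S_1\times A_6$ and $\langle(234567),(27)(34)\rangle$) follow immediately from Theorem \ref{lem1}, since the partition $\{\{i\},N\setminus\{i\}\}$ has a singleton block; orbit-structure equivalence then transfers the conclusion to the other two groups at no extra cost. The genuinely new case is $\mathrm{AGL}_1(7)$. Here I would compute the orbit structure $\textfrak{O}_G$, pass to the quotient space $\mathbb{R}^{\textfrak{O}_G}$, impose the symmetry constraints \eqref{a6} on the elemental inequalities (1)--(2) to obtain the H-representation of $\Psi_G$, and feed this to $\mathsf{lrs}$ to extract the extreme rays (the V-representation). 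To conclude $\overline{\Psi_G^*}=\Psi_G$ it is enough to show every extreme ray is almost entropic: the rays carrying uniform matroids $U_{i,7}$ are representable and hence entropic, and the remaining rays I would exhibit as (multi-)linearly representable over a suitable finite field $\mathrm{GF}(q)$, verifying representability by checking that each submatrix rank matches the prescribed polymatroid value, exactly as was done for $\mathbf{h_1}$ over $\mathrm{GF}(11)$ in the $\mathrm{PSL}_2(5)$ case.

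For the non-tightness direction ($\overline{\Psi_G^*}\subsetneq\Psi_G$) I would handle the maximal critical classes just below the tight ones. Several of these follow without computation from the transfer lemmas already available: Theorem \ref{lem3} kills the classes built from $C_n$, $D_n$, $S_1\times C_{n-1}$, $S_1\times D_{n-1}$, and the new Lemma \ref{Lem2} lets me conclude $\overline{\Psi_{G_1\times G_2}^*}\neq\Psi_{G_1\times G_2}$ whenever a factor $G_1$ already fails tightness on its own block, which disposes of the product-type classes (those of the form $S_a\times$ something, $S_2\mathrm{wr}\,\cdots$, etc.) by reduction to the degree-$\leq6$ results in Theorem \ref{Thm4}. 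For any remaining maximal class not covered by these reductions, I would replicate the extreme-ray refutation: compute the V-representation of $\Psi_G$ via $\mathsf{lrs}$, locate an extreme ray whose polymatroid $\mathbf{h}$, after a contraction and/or restriction to four coordinates, violates the Zhang--Yeung inequality, thereby proving $\mathbf{h}\notin\overline{\Gamma_7^*}$ and hence non-tightness.

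The main obstacle I anticipate is the sheer combinatorial and computational scale at degree seven: there are $51$ equivalence classes rather than $35$, the quotient spaces $\mathbb{R}^{\textfrak{O}_G}$ have more coordinates, and the $\mathsf{lrs}$ vertex enumerations grow correspondingly. The conceptually delicate step, however, is the explicit almost-entropy certification for $\mathrm{AGL}_1(7)$: unlike the uniform-matroid rays, the exceptional extreme rays require \emph{finding} a working field and an explicit representing matrix, and if a ray turns out to be non-representable I would need either a group/non-Abelian or an asymptotic construction to establish almost entropicness. Verifying that Lemma \ref{Lem2} together with Theorems \ref{lem1} and \ref{lem3} genuinely covers every maximal non-tight class below the six listed tight groups—so that no further $\mathsf{lrs}$ refutation is needed beyond a small explicit set—is the bookkeeping step on which the completeness of the classification ultimately rests.
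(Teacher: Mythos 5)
Your proposal follows essentially the same route as the paper: reduce to the critical antichains in $P_7$ via Theorem \ref{lem2}, certify $S_1\times S_6$ by Theorem \ref{lem1} and $\mathrm{AGL}_1(7)$ by an $\mathsf{lrs}$ extreme-ray enumeration plus explicit linear representations of the non-uniform rays, and refute the maximal non-tight classes via Lemma \ref{Lem2} or Zhang--Yeung violations of contracted/restricted extreme rays (the paper needs exactly two such explicit refutations, for $S_1\times\mathrm{PSL}_2(5)$ and $\mathrm{PGL}_3(2)$). One small correction: Lemma \ref{Lem2} does not dispose of $S_2\times S_5$ or $S_3\times S_4$, since neither factor fails tightness on its own block; those cases follow instead from the ``only if'' direction of Theorem \ref{lem1}, which you invoke only for the tight cases.
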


\begin{proof}
	By the same argument in Theorem \ref{Thm4}, the equivalence classes of subgroups of $S_7$ that need to be considered, are the following two families
\begin{itemize}
	
	\item $S_1 \times S_6:\langle (234567),(23) \rangle$,  \\ $ S_1 \times A_6:\langle (234567),(567) \rangle$, \\  $\langle (234567),(27)(34) \rangle$,
	\item $\mathrm{AGL}_1(7):\langle (1234567),(163247) \rangle$,
\end{itemize}
and
\begin{itemize}
	\item $S_2 \times S_5:\langle(12),(34567),(34)\rangle $, \\  $S_2 \times A_5:\langle(34567)$,  $(12),(345)\rangle$, \\  $\langle(12)(345)(67),(12)(3765)\rangle$,  \\ $\langle(12)(34567)$,   $(12)(3651)\rangle$,
	\item $S_3 \times S_4:\langle(123)(47),(13)(4567)\rangle$, \\  $\langle(123)(45)(67),(13)(4567),(13)(47)\rangle$, \\ $
	\langle(123)(457),(13)(47)(56)\rangle$, \\ $\langle(123)(47),(123)(4567)\rangle$, \\ $\langle(123)(47)(56),(123)(457)\rangle
	$,
	\item $S_1 \times \mathrm{PSL}_2(5):\langle (234)(567),(25734) \rangle$,
	\item $S_1\times 3^2 \cdot 2^2$ $:\langle (234567),(26)(37) \rangle$, \\  $S_1\times S_3\mathrm{wr}_2C_2:\langle(23)(45)(67),(25)(346) \rangle$, \\  $ S_1\times A_6\cap$ $ S_3\mathrm{wr}_2C_2:$ $\langle (4567)(23),(27)(34) \rangle$,  \\ 
	$S_1 \times C_3\mathrm{wr}_2C_2:\langle (234567),(264)(357) \rangle$,
	\item $S_1\times S_2\mathrm{wr}_3S_3:\langle (234567),(27)(45) \rangle$, \\  $ 
	\langle (234)(567),(24)(35)(67) \rangle$,
	\item $\mathrm{PGL}_3(2):\langle (1234567),(1367)(45) \rangle$.
\end{itemize}
%


\begin{figure*}[htp]
	\centering
	\includegraphics[width=1.0\linewidth,height=0.5\linewidth]{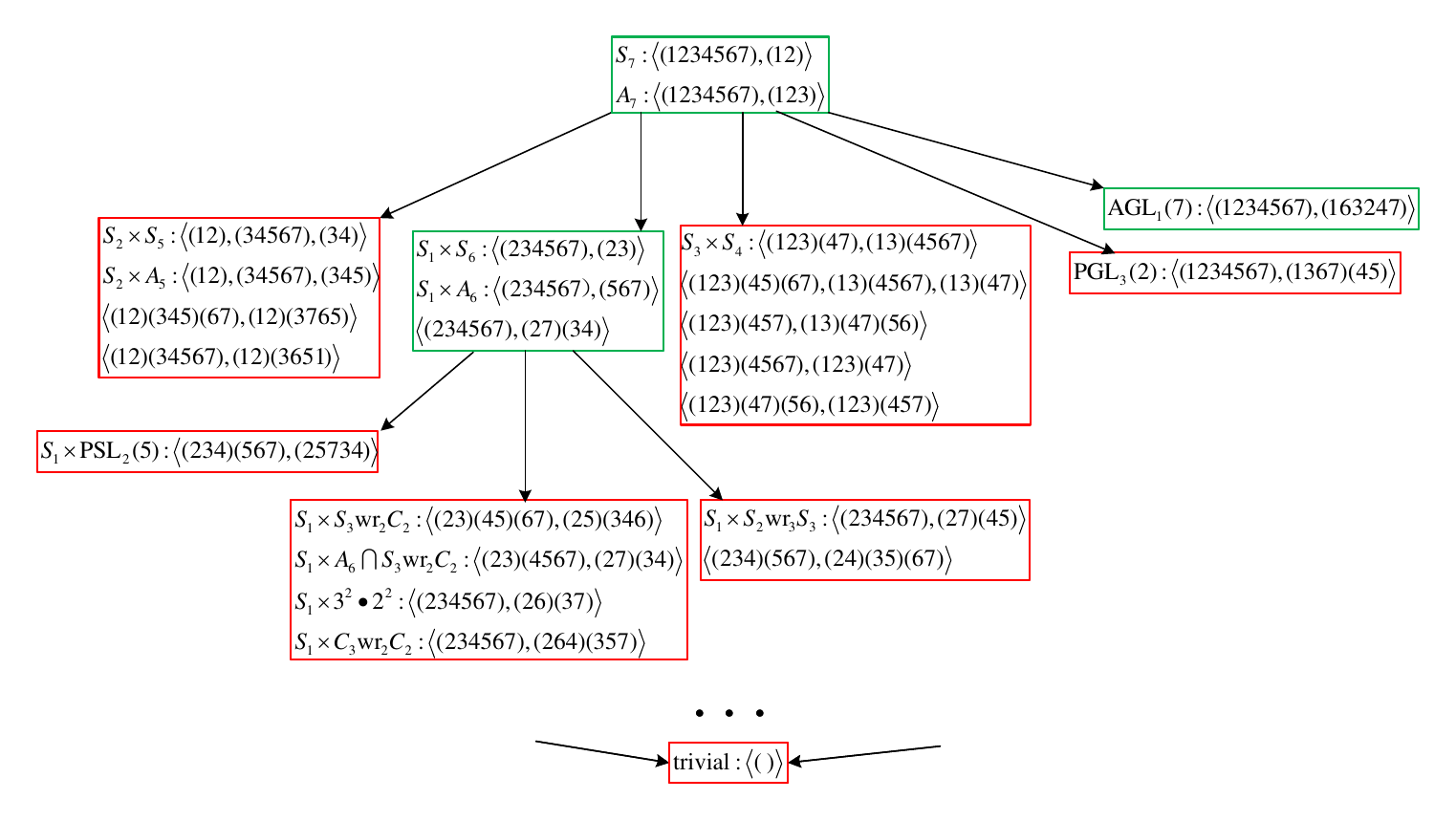}
	\captionsetup{font=small}
	\caption{The Hasse diagram of critical part of $P_7$ }
	\label{fig3}
\end{figure*}
\begin{spacing}{1.2}
\noindent\underline{$\overline{\Psi_G^*} =\Psi_G$ for $G=S_1\times S_6$:} 
\end{spacing}
It is immediately implied  by Theorem \ref{lem1}.
\begin{spacing}{1.2}
\noindent\underline{$\overline{\Psi_G^*} =\Psi_G$ for $G=\mathrm{AGL}_1(7)$:} 
\end{spacing}
Through $\mathsf{lrs}$ we can obtain all the 13 extreme rays of $\Psi_G$, 
among which 7 are uniform matroids.
We now show that the remaining 6 extreme rays contain representable polymatroids and so are almost entropic. For the ray containing
$$\mathbf{h_1}(A)=\left\{
\begin{aligned}
	2 \quad& \text{if}\ |A| = 1,\\
	4 \quad& \text{if}\ |A| = 2 ,\\
	5 \quad& \text{if}\ A \in \mathcal{O}(124),\\
	6 \quad& \text{otherwise},
\end{aligned}
\right.
$$
as it is the sum of two Fano matroids\footnote{The definition and its representation of Fano matroid can be found in \textnormal{\cite[Chapter 6]{oxley2006matroid}}}, which are representable, $\mathbf{h_1}$ is representable as well. Now consider the ray containing
$$\mathbf{h_2}(A)=\left\{
\begin{aligned}
	2 \quad& \text{if}\ |A| = 1,\\
	4 \quad& \text{if}\ |A| = 2 ,\\
	6 \quad& \text{if}\ |A| = 3 ,\\
	7 \quad& \text{if}\ A \in \mathcal{O}(1235),\\
	8 \quad& \text{otherwise}.
\end{aligned}
\right.
$$
Note that $\mathbf{h_2}=\mathbf{r_1}+\mathbf{r_2}$, where 
$$\mathbf{r_1}(A)=\left\{
\begin{aligned}
	1 \quad& \text{if}\ |A| = 1,\\
	2 \quad& \text{if}\ |A| = 2 ,\\
	3 \quad& \text{if}\ |A| = 3\ \text{or} \ A = \{1,2,3,5\},\{1,2,4,7\},\\ & \{1,3,6,7\},\{1,4,5,6\},\{2,3,4,6\},\{2,5,6,7\},\\ & \{3,4,5,7\},\\
	4 \quad& \text{otherwise},
\end{aligned}
\right.
$$
$$\mathbf{r_2}(A)=\left\{
\begin{aligned}
	1 \quad& \text{if}\ |A| = 1,\\
	2 \quad& \text{if}\ |A| = 2 ,\\
	3 \quad& \text{if}\ |A| = 3\ \text{or} \ A = \{1,2,3,6\},\{1,2,5,7\},\\ & \{1,3,4,5\}, \{1,4,6,7\},\{2,3,4,7\},\{2,4,5,6\},\\ & \{3,5,6,7\},\\
	4 \quad& \text{otherwise},
\end{aligned}
\right.
$$
are both matroids with $7$ elements. By \textnormal{\cite[Prop. 6.4.10]{oxley2006matroid}},
$\mathbf{r_1}$ and $\mathbf{r_2}$ are representable and then so is $\mathbf{h_2}$. For the remaining $4$ extreme rays, due to the page limitation, we list the polymatroids they contain in Appendix A.
Using $\mathsf{ITAP}$\cite{ITAP}, we obtain the representations of the four polymatroids 
 over $\mathrm{GF}(2)$, as shown in the following matrices.
\[
\begin{bNiceArray}{@{}ccccccccc@{}}[first-row]
	&1  &2 &3 &4 &5 &6 &7 &\\
	&1\  0 &0\  0&0\  0&1\  0 &0\  0&1\  0 &1\  1 & \\
	&0\  1 &0\  0&1\  0&0\  0 &0\  1&1\  1 &0\  0 & \\
	&0\  0 &1\  0&1\  0&1\  0 &0\  0&0\  0 &1\  1 & \\
	&0\  0 &0\  1&0\  0&0\  0 &1\  0&0\  1 &1\  0 & \\
	&0\  0 &0\  0&0\  0&0\  1 &1\  0&1\  0 &0\  1 & \\
	&0\  0 &0\  0&0\  1&0\  0 &0\  1&0\  1 &0\  1 & \\
	\CodeAfter
	\tikz \draw [dashed, line width=0.6pt, black] (1-|3) -- (7-|3) ;
	\tikz \draw [dashed, line width=0.6pt, black] (1-|4) -- (7-|4) ;
	\tikz \draw [dashed, line width=0.6pt, black] (1-|5) -- (7-|5) ;
	\tikz \draw [dashed, line width=0.6pt, black] (1-|6) -- (7-|6) ;
	\tikz \draw [dashed, line width=0.6pt, black] (1-|7) -- (7-|7) ;
	\tikz \draw [dashed, line width=0.6pt, black] (1-|8) -- (7-|8) ;
\end{bNiceArray}
\] 
\[
\begin{bNiceArray}{@{}ccccccccc@{}}[first-row]
	&1  &2 &3 &4 &5 &6 &7 &\\
	&1\  0 &0\  0&1\  0&1\  0 &0\  0&1\  1 &1\  1 & \\
	&0\  1 &0\  0&0\  0&0\  0 &0\  0&0\  1 &0\  0 & \\
	&0\  0 &1\  0&1\  0&1\  0 &0\  0&0\  0 &1\  1 & \\
	&0\  0 &0\  1&1\  1&0\  0 &0\  0&0\  1 &1\  0 & \\
	&0\  0 &0\  0&0\  0&1\  0 &1\  0&0\  0 &0\  1 & \\
	&0\  0 &0\  0&0\  1&0\  0 &0\  1&1\  0 &0\  1 & \\
	&0\  0 &0\  0&1\  0&0\  1 &0\  0&1\  0 &0\  1 & \\
	&0\  0 &0\  0&0\  1&0\  0 &0\  0&0\  1 &0\  1 & \\
	\CodeAfter
	\tikz \draw [dashed, line width=0.6pt, black] (1-|3) -- (10-|3) ;
	\tikz \draw [dashed, line width=0.6pt, black] (1-|4) -- (10-|4) ;
	\tikz \draw [dashed, line width=0.6pt, black] (1-|5) -- (10-|5) ;
	\tikz \draw [dashed, line width=0.6pt, black] (1-|6) -- (10-|6) ;
	\tikz \draw [dashed, line width=0.6pt, black] (1-|7) -- (10-|7) ;
	\tikz \draw [dashed, line width=0.6pt, black] (1-|8) -- (10-|8) ;
\end{bNiceArray}
\] 
\[
\begin{bNiceArray}{@{}ccccccccc@{}}[first-row]
	&1       &2       &3      &4      &5       &6      &7& \\
	&1\ 1\ 0 &1\ 0\ 0 &1\ 0\ 0&0\ 0\ 0&1\ 0\ 0&1\ 0\ 0&0\ 0\ 0 & \\
	&0\ 0\ 0 &0\ 0\ 0 &1\ 0\ 0&0\ 0\ 0&0\ 1\ 0&0\ 1\ 0&0\ 0\ 0 & \\
	&0\ 0\ 0 &0\ 0\ 0 &0\ 0\ 0&1\ 0\ 0&0\ 0\ 1&0\ 0\ 0&1\ 0\ 0 & \\
	&0\ 1\ 0 &0\ 0\ 0 &0\ 1\ 0&0\ 0\ 0&0\ 0\ 0&0\ 0\ 0&0\ 1\ 0 & \\
	&1\ 0\ 0 &0\ 0\ 0 &0\ 0\ 0&0\ 0\ 0&0\ 1\ 1&0\ 0\ 1&0\ 0\ 1 & \\
	&0\ 0\ 0 &0\ 1\ 0 &0\ 0\ 0&0\ 1\ 0&0\ 1\ 0&1\ 0\ 1&0\ 1\ 0 & \\
	&0\ 0\ 1 &0\ 1\ 0 &0\ 0\ 0&0\ 1\ 0&1\ 0\ 0&1\ 0\ 0&0\ 1\ 0 & \\
	&0\ 0\ 0 &0\ 1\ 0 &0\ 0\ 1&0\ 0\ 0&0\ 0\ 0&0\ 0\ 0&1\ 0\ 1 & \\
	&0\ 0\ 0 &0\ 0\ 0 &0\ 0\ 0&0\ 0\ 1&0\ 0\ 1&0\ 1\ 1&0\ 1\ 1 & \\
	&0\ 0\ 0 &0\ 0\ 1 &0\ 0\ 0&0\ 0\ 0&0\ 0\ 0&1\ 0\ 1&1\ 1\ 0 & \\
	\CodeAfter
	\tikz \draw [dashed, line width=0.6pt, black] (1-|3) -- (11-|3) ;
	\tikz \draw [dashed, line width=0.6pt, black] (1-|4) -- (11-|4) ;
	\tikz \draw [dashed, line width=0.6pt, black] (1-|5) -- (11-|5) ;
	\tikz \draw [dashed, line width=0.6pt, black] (1-|6) -- (11-|6) ;
	\tikz \draw [dashed, line width=0.6pt, black] (1-|7) -- (11-|7) ;
	\tikz \draw [dashed, line width=0.6pt, black] (1-|8) -- (11-|8) ;
\end{bNiceArray}
\]
\[
\begin{bNiceArray}{@{}ccccccccc@{}}[first-row]
	&1       &2       &3      &4  &5  &6      &7& \\
	&1\ 0\ 0 &1\ 0\ 0 &0\ 0\ 0&1\ 1\ 0&0\ 0\ 0&0\ 0\ 0&0\ 1\ 1 & \\
	&0\ 1\ 0 &0\ 0\ 0 &0\ 0\ 0&1\ 1\ 0&1\ 0\ 0&1\ 0\ 0&1\ 0\ 0 & \\
	&0\ 0\ 0 &1\ 0\ 0 &0\ 0\ 0&1\ 0\ 0&0\ 0\ 0&0\ 1\ 0&0\ 1\ 0 & \\
	&0\ 0\ 0 &0\ 0\ 0 &0\ 0\ 0&1\ 0\ 0&1\ 0\ 0&0\ 0\ 1&0\ 0\ 1 & \\
	&0\ 0\ 0 &0\ 1\ 0 &1\ 0\ 0&1\ 0\ 0&0\ 0\ 0&0\ 0\ 0&0\ 1\ 0 & \\
	&0\ 0\ 0 &0\ 0\ 0 &1\ 0\ 0&1\ 0\ 0&0\ 1\ 0&0\ 0\ 0&0\ 1\ 1 & \\
	&0\ 0\ 0 &0\ 0\ 0 &0\ 1\ 0&0\ 1\ 0&0\ 0\ 0&0\ 0\ 1&0\ 1\ 0 & \\
	&0\ 0\ 1 &0\ 0\ 0 &0\ 0\ 0&0\ 0\ 1&0\ 0\ 0&0\ 0\ 0&1\ 0\ 1 & \\
	&0\ 0\ 0 &0\ 0\ 1 &0\ 0\ 0&0\ 0\ 1&0\ 0\ 0&0\ 0\ 0&0\ 1\ 0 & \\
	&0\ 0\ 0 &0\ 0\ 0 &0\ 0\ 1&0\ 0\ 0&0\ 0\ 0&0\ 0\ 1&1\ 0\ 0 & \\
	&0\ 0\ 0 &0\ 0\ 0 &0\ 0\ 0&0\ 0\ 0&0\ 0\ 1&1\ 0\ 0&0\ 0\ 1 & \\
	\CodeAfter
	\tikz \draw [dashed, line width=0.6pt, black] (1-|3) -- (12-|3) ;
	\tikz \draw [dashed, line width=0.6pt, black] (1-|4) -- (12-|4) ;
	\tikz \draw [dashed, line width=0.6pt, black] (1-|5) -- (12-|5) ;
	\tikz \draw [dashed, line width=0.6pt, black] (1-|6) -- (12-|6) ;
	\tikz \draw [dashed, line width=0.6pt, black] (1-|7) -- (12-|7) ;
	\tikz \draw [dashed, line width=0.6pt, black] (1-|8) -- (12-|8) ;
\end{bNiceArray}
\]
\begin{spacing}{1.2}
\noindent\underline{$\overline{\Psi_G^*} \neq \Psi_G$ for $G=S_2\times S_5$, $S_3\times S_4$:} 
\end{spacing}
It is immediately implied  by Theorem \ref{lem1}.
	
\noindent\underline{$\overline{\Psi_G^*} \neq \Psi_G$ for $G=S_1 \times \mathrm{PSL}_2(5)$:} 

Among the extreme rays of $\Psi_G$, one of them containing the following polymatroid 
$$ \mathbf{h_3}(A)=\left\{
\begin{aligned}
2 \quad& \text{if}\ |A| = 1 \ \text{and} \ A \neq \{1\},\\
3 \quad& \text{if}\ A = \{1\}, \\
4 \quad& \text{if}\ |A| = 2 ,\\
5 \quad& \text{if}\ A \in \mathcal{O}(123)\ \text{or}\ \mathcal{O}(234),\\
6 \quad& \text{otherwise}.
\end{aligned}
\right.
$$
By contracting $\{2\}$ from $\mathbf{h_3}$, then restricting it on $\{1,3,4,5\}$,
we obtain $\widetilde{\mathbf{h_3}}$, which violates Zhang-Yeung inequality. Thus, it is non-entropic, $\overline{\Psi_G^*} \neq \Psi_G$.
\begin{spacing}{1.2}
\noindent\underline{$\overline{\Psi_G^*} \neq \Psi_G$ for $G = S_1 \times  S_3\mathrm{wr}_2C_2$, $S_1 \times  S_2\mathrm{wr}_3S_3$:} 
\end{spacing}
Since $\overline{\Psi_{S_3\mathrm{wr}_2C_2}^*} \neq \Psi_{S_3\mathrm{wr}_2C_2}$, $\overline{\Psi_{S_2\mathrm{wr}_3S_3}^*} \neq \Psi_{S_2\mathrm{wr}_3S_3}$, which immediately implies the result by Lemma \ref{Lem2}.

\begin{spacing}{1.2}
\noindent\underline{$\overline{\Psi_G^*} \neq \Psi_G$ for $G=\mathrm{PGL}_3(2)$:}  
\end{spacing}
Among the extreme rays of $\Psi_G$, there exsits one containing the following polymatroid
$$\mathbf{h_4}(A)=\left\{
\begin{aligned}
2 \quad& \text{if}\ |A| = 1,\\
4 \quad& \text{if}\ |A| = 2 ,\\
5 \quad& \text{if}\ A \in \mathcal{O}(123),\\
6 \quad& \text{otherwise}.
\end{aligned}
\right.
$$
By contracting $\{1\}$ from $\mathbf{h_4}$, then restricting it on $\{2,3,4,5\}$, we obtain $\widetilde{\mathbf{h_4}}$, which violates Zhang-Yeung inequality. Thus, it is non-entropic and $\overline{\Psi_G^*} \neq \Psi_G$.
\end{proof}

\bibliography{isit2024}
\bibliographystyle{IEEEtran}

\clearpage

\section*{Appendix}

\setcounter{subsection}{0}
\subsection{More details about the symmetric polymatroidal regions}
\label{appa}
In this subsection of Appendix, we provide more details about the symmetric polymatroidal regions, which are used in the proofs in Section \ref{results}, i.e., orbit structures,  H-representation and V-representation of $\Psi_G$.
\subsection*{1). $G=\mathrm{PSL}_2(5):\langle (123)(456),(14623) \rangle$}
We depicted the orbit structure of $\mathrm{PSL}_2(5)$ in Fig. \ref{fig6}.
\setcounter{figure}{0}
\begin{figure}[htp]
	\centering
	\includegraphics[scale=0.52]{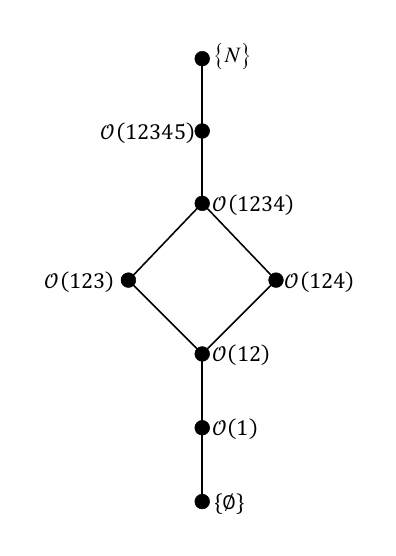}
	\captionsetup{font=small}
	\caption{$\textfrak{O}_{\mathrm{PSL}_2(5)}$}
	\label{fig6}
\end{figure}

Note that each orbit in $\textfrak{O}_{\mathrm{PSL}_2(5)}$ corresponds a dimension in $\mathrm{fix}_{\mathrm{PSL}_2(5)}$, so
\begin{align}
	\hspace{-2cm}
	\Psi_{\mathrm{PSL}_2(5)} =\{& \mathbf{s} \in \mathrm{fix}_{\mathrm{PSL}_2(5)}: \notag 2\mathbf{s}_{12}\geq\mathbf{s}_1 +\mathbf{s}_{123},\\ \notag
	&2\mathbf{s}_{12}\geq\mathbf{s}_1 +\mathbf{s}_{124},\\ \notag
	&2\mathbf{s}_{124}\geq\mathbf{s}_{12} +\mathbf{s}_{1234},\\ \notag
	&\mathbf{s}_{123}+\mathbf{s}_{124}\geq\mathbf{s}_{12} +\mathbf{s}_{1234},\\ \notag
	&2\mathbf{s}_{123}\geq\mathbf{s}_{12} +\mathbf{s}_{1234},\\ \notag
	&2\mathbf{s}_{1234}\geq\mathbf{s}_{123} +\mathbf{s}_{12345},\\ \notag
	&2\mathbf{s}_{1234}\geq\mathbf{s}_{124} +\mathbf{s}_{12345},\\ \notag
	&2\mathbf{s}_{12345}\geq\mathbf{s}_{1234} +\mathbf{s}_{123456},\\ \notag
	&\mathbf{s}_{123456}\geq\mathbf{s}_{12345},\\ \notag
	&2\mathbf{s}_{1}\geq\mathbf{s}_{12} \}.
\end{align}
For simplicity, we rewrite each inequality by its coefficient, and so the H-representation of $\Psi_{\mathrm{PSL}_2(5)}$ is
\begin{itemize}
	\item $-1$  $2$  $-1$  $0$  $0$  $0$  $0$
	\item$-1$  $2$  $0$  $-1$  $0$  $0$  $0$
	\item$0$  $-1$  $0$  $2$  $-1$  $0$  $0$
	\item$0$  $-1$  $1$  $1$  $-1$  $0$  $0$
	\item$0$  $-1$  $2$  $0$  $-1$  $0$  $0$
	\item$0$  $0$  $-1$  $0$  $2$  $-1$  $0$
	\item$0$  $0$  $0$  $-1$  $2$  $-1$  $0$
	\item$0$  $0$  $0$  $0$  $-1$  $2$  $-1$
	\item$0$  $0$  $0$  $0$  $0$  $-1$  $1$
	\item$2$  $-1$  $0$  $0$  $0$  $0$  $0$
\end{itemize}
where the order of the indices is $(\mathcal{O}(1)$,$\mathcal{O}(12)$,$\mathcal{O}(123)$, $\mathcal{O}(124)$,$\mathcal{O}(1234)$, $\mathcal{O}(12345)$,$\mathcal{O}(123456))$.

For the V-representation, for those extreme rays containing polymatroid other than uniform matroid, we represent it by a vector indexed by the orbits.
\begin{itemize}
	\item $U_{1,6}$, $U_{2,6}$, $U_{3,6}$, $U_{4,6}$, $U_{5,6}$, $U_{6,6}$;
	\item $2$  $4$  $5$  $6$  $6$  $6$  $6$;
	\item $2$  $4$  $6$  $5$  $6$  $6$  $6$;
\end{itemize}

In the following, for a $G$-symmetric polymatroidal region $\Psi_G$, we represent its orbit structure by its Hasse diagram in the figure, the H-representation by inequality coefficients and V-representation by vector of polymatroid except the uniform matroid.
\subsection*{2). $G=S_3\mathrm{wr}_2C_2:\langle (12)(34)(56),(14)(235) \rangle$ }
\begin{figure}[htp]
	\centering
	\includegraphics[scale=0.52]{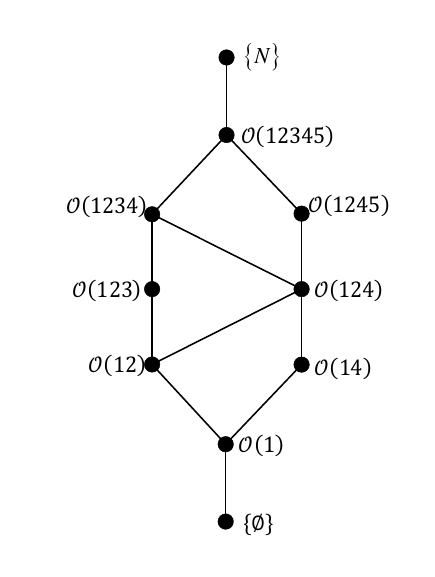}
	\captionsetup{font=small}
	\caption{$\textfrak{O}_{S_3\mathrm{wr}_2C_2}$}
	\label{fig7}
\end{figure}
The order of indices: $(\mathcal{O}(1)$,$\mathcal{O}(12)$,$\mathcal{O}(14)$,$\mathcal{O}(124)$,$\mathcal{O}(146)$, $\mathcal{O}(1234)$,$\mathcal{O}(1246)$,$\mathcal{O}(12345)$,$\mathcal{O}(123456))$.

H-representation:
\begin{itemize}
	\item $-1$  $0$  $2$  $-1$  $0$  $0$  $0$  $0$  $0$
	\item $-1$  $1$  $1$  $-1$  $0$  $0$  $0$  $0$  $0$
	\item $-1$  $2$  $0$  $0$  $-1$  $0$  $0$  $0$  $0$
	\item $0$  $-1$  $0$  $1$  $1$  $-1$  $0$  $0$  $0$
	\item $0$  $-1$  $0$  $2$  $0$  $0$  $-1$  $-1$  $0$
	\item $0$  $-1$  $0$  $2$  $0$  $0$  $-1$  $0$  $0$
	\item $0$  $0$  $-1$  $2$  $0$  $-1$  $0$  $0$  $0$
	\item $0$  $0$  $-1$  $2$  $0$  $0$  $-1$  $-1$  $0$
	\item $0$  $0$  $-1$  $2$  $0$  $0$  $-1$  $0$  $0$
	\item $0$  $0$  $0$  $-1$  $0$  $0$  $2$  $-1$  $0$
	\item $0$  $0$  $0$  $-1$  $0$  $0$  $2$  $0$  $0$
	\item $0$  $0$  $0$  $-1$  $0$  $1$  $1$  $-1$  $0$
	\item $0$  $0$  $0$  $-1$  $0$  $1$  $1$  $0$  $0$
	\item $0$  $0$  $0$  $0$  $-1$  $2$  $0$  $-1$  $0$
	\item $0$  $0$  $0$  $0$  $0$  $-1$  $0$  $2$  $-1$
	\item $0$  $0$  $0$  $0$  $0$  $0$  $-1$  $1$  $-1$
	\item $0$  $0$  $0$  $0$  $0$  $0$  $-1$  $2$  $-1$
	\item $0$  $0$  $0$  $0$  $0$  $0$  $0$  $-1$  $1$
	\item $2$  $-1$  $0$  $0$  $0$  $0$  $0$  $0$  $0$
	\item $2$  $0$  $-1$  $0$  $0$  $0$  $0$  $0$  $0$
\end{itemize}

V-representation:
\begin{itemize}
	\item $U_{1,6}$, $U_{2,6}$, $U_{3,6}$, $U_{4,6}$, $U_{5,6}$,$U_{6,6}$; 
	\item $2$  $3$  $4$  $4$  $4$  $4$  $4$  $4$  $4$
	\item $2$  $4$  $3$  $4$  $4$  $4$  $4$  $4$  $4$
	\item $2$  $4$  $4$  $5$  $6$  $6$  $6$  $6$  $6$
	\item $4$  $7$  $8$  $10$  $12$  $12$  $12$  $12$  $12$
	\item $1$  $2$  $2$  $3$  $2$  $3$  $3$  $3$  $3$
	\item $1$  $2$  $1$  $2$  $1$  $2$  $2$  $2$  $2$
	\item $3$  $5$  $6$  $7$  $9$  $8$  $9$  $9$  $9$
	\item $4$  $8$  $8$  $10$  $12$  $11$  $12$  $12$  $12$
	\item $3$  $5$  $6$  $7$  $8$  $8$  $9$  $9$  $9$
	\item $2$  $4$  $4$  $6$  $6$  $7$  $8$  $8$  $8$
	\item $1$  $2$  $2$  $3$  $2$  $4$  $3$  $4$  $4$
	\item $2$  $4$  $3$  $5$  $4$  $6$  $5$  $6$  $6$
	\item $2$  $4$  $4$  $6$  $6$  $8$  $7$  $8$  $8$
\end{itemize}
\subsection*{3). $S_2\mathrm{wr}_3S_3:\langle (123456),(16)(34) \rangle$ }
\begin{figure}[htp]
	\centering
	\includegraphics[scale=0.52]{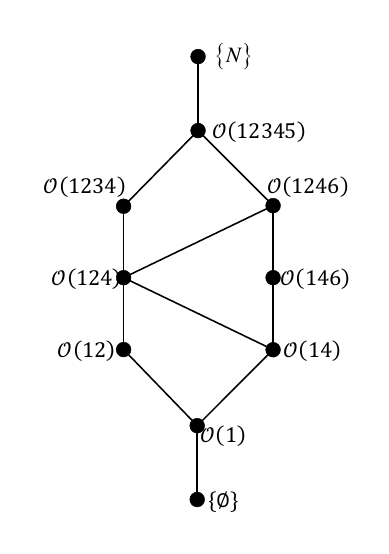}
	\captionsetup{font=small}
	\caption{$\textfrak{O}_{S_2\mathrm{wr}_3S_3}$}
	\label{fig8}
\end{figure}
The order of indices: $(\mathcal{O}(1)$,$\mathcal{O}(12)$,$\mathcal{O}(14)$,$\mathcal{O}(123)$,$\mathcal{O}(124)$, $\mathcal{O}(1234)$,$\mathcal{O}(1245)$,$\mathcal{O}(12345)$,$\mathcal{O}(123456))$.

H-representation:
\begin{itemize}
	\item $-1$  $1$  $1$  $0$  $-1$  $0$  $0$  $0$  $0$
	\item $-1$  $2$  $0$  $-1$  $0$  $0$  $0$  $0$  $0$
	\item $-1$  $2$  $0$  $0$  $-1$  $0$  $0$  $0$  $0$
	\item $0$  $-1$  $0$  $0$  $2$  $0$  $-1$  $0$  $0$
	\item $0$  $-1$  $0$  $1$  $1$  $-1$  $0$  $-1$  $0$
	\item $0$  $-1$  $0$  $1$  $1$  $-1$  $0$  $0$  $0$
	\item $0$  $-1$  $0$  $2$  $0$  $-1$  $0$  $-1$  $0$
	\item $0$  $-1$  $0$  $2$  $0$  $-1$  $0$  $0$  $0$
	\item $0$  $0$  $-1$  $0$  $2$  $-1$  $0$  $-1$  $0$
	\item $0$  $0$  $-1$  $0$  $2$  $-1$  $0$  $0$  $0$
	\item $0$  $0$  $-1$  $0$  $2$  $0$  $-1$  $0$  $0$
	\item $0$  $0$  $0$  $-1$  $0$  $2$  $0$  $-1$  $0$
	\item $0$  $0$  $0$  $-1$  $0$  $2$  $0$  $0$  $0$
	\item $0$  $0$  $0$  $0$  $-1$  $1$  $1$  $-1$  $0$
	\item $0$  $0$  $0$  $0$  $-1$  $1$  $1$  $0$  $0$
	\item $0$  $0$  $0$  $0$  $-1$  $2$  $0$  $-1$  $0$
	\item $0$  $0$  $0$  $0$  $-1$  $2$  $0$  $0$  $0$
	\item $0$  $0$  $0$  $0$  $0$  $-1$  $0$  $1$  $-1$
	\item $0$  $0$  $0$  $0$  $0$  $-1$  $0$  $2$  $-1$
	\item $0$  $0$  $0$  $0$  $0$  $0$  $-1$  $2$  $-1$
	\item $0$  $0$  $0$  $0$  $0$  $0$  $0$  $-1$  $1$
	\item $2$  $-1$  $0$  $0$  $0$  $0$  $0$  $0$  $0$
	\item $2$  $0$  $-1$  $0$  $0$  $0$  $0$  $0$  $0$
\end{itemize}

V-representation:
\begin{itemize}
	\item $U_{1,6}$,$U_{2,6}$, $U_{3,6}$, $U_{4,6}$, $U_{5,6}$, $U_{6,6}$; 
	\item $2$  $3$  $4$  $4$  $4$  $4$  $4$  $4$  $4$
	\item $1$  $2$  $1$  $2$  $2$  $2$  $2$  $2$  $2$
	\item $4$  $6$  $8$  $7$  $8$  $8$  $8$  $8$  $8$
	\item $2$  $4$  $4$  $5$  $6$  $6$  $6$  $6$  $6$
	\item $2$  $4$  $4$  $6$  $5$  $6$  $6$  $6$  $6$
	\item $2$  $4$  $3$  $6$  $5$  $6$  $6$  $6$  $6$
	\item $2$  $4$  $4$  $6$  $6$  $7$  $8$  $8$  $8$
	\item $4$  $8$  $8$  $11$  $12$  $14$  $16$  $16$  $16$
	\item $2$  $4$  $4$  $6$  $5$  $6$  $5$  $6$  $6$
	\item $1$  $2$  $1$  $3$  $2$  $3$  $2$  $3$  $3$
	\item $2$  $4$  $2$  $5$  $4$  $6$  $4$  $6$  $6$
	\item $1$  $2$  $2$  $3$  $3$  $4$  $3$  $4$  $4$
\end{itemize}

\subsection*{4). $\mathrm{AGL}_1(7):\langle (1234567),(163247) \rangle$  }
\begin{figure}[htp]
	\centering
	\includegraphics[scale=0.52]{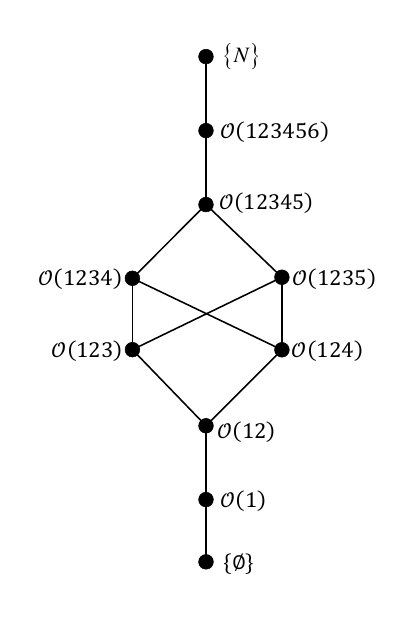}
	\captionsetup{font=small}
	\caption{$\textfrak{O}_{\mathrm{AGL}_1(7)}$}
	\label{fig9}
\end{figure}
The order of indices: $(\mathcal{O}(1)$,$\mathcal{O}(12)$,$\mathcal{O}(123)$,$\mathcal{O}(124)$,$\mathcal{O}(1234)$, $\mathcal{O}(1235)$,$\mathcal{O}(12345)$,$\mathcal{O}(123456)$,$\mathcal{O}(1234567))$.

H-representation:
\begin{itemize}
	\item $-1$  $2$  $-1$  $0$  $0$  $0$  $0$  $0$  $0$
	\item$-1$  $2$  $0$  $-1$  $0$  $0$  $0$  $0$  $0$
	\item$0$  $-1$  $0$  $2$  $-1$  $0$  $0$  $0$  $0$
	\item$0$  $-1$  $1$  $1$  $-1$  $0$  $0$  $0$  $0$
	\item$0$  $-1$  $1$  $1$  $0$  $-1$  $0$  $0$  $0$
	\item$0$  $-1$  $2$  $0$  $-1$  $0$  $0$  $0$  $0$
	\item$0$  $-1$  $2$  $0$  $0$  $-1$  $0$  $0$  $0$
	\item$0$  $0$  $-1$  $0$  $0$  $2$  $-1$  $0$  $0$
	\item$0$  $0$  $-1$  $0$  $1$  $1$  $-1$  $0$  $0$
	\item$0$  $0$  $-1$  $0$  $2$  $0$  $-1$  $0$  $0$
	\item$0$  $0$  $0$  $-1$  $1$  $1$  $-1$  $0$  $0$
	\item$0$  $0$  $0$  $-1$  $2$  $0$  $-1$  $0$  $0$
	\item$0$  $0$  $0$  $0$  $-1$  $0$  $2$  $-1$  $0$
	\item$0$  $0$  $0$  $0$  $0$  $-1$  $2$  $-1$  $0$
	\item$0$  $0$  $0$  $0$  $0$  $0$  $-1$  $2$  $-1$
	\item$0$  $0$  $0$  $0$  $0$  $0$  $0$  $-1$  $1$
	\item$2$  $-1$  $0$  $0$  $0$  $0$  $0$  $0$  $0$
\end{itemize}

V-representation:
\begin{itemize}
	\item $U_{1,7}$, $U_{2,7}$, $U_{3,7}$, $U_{4,7}$,$U_{5,7}$, $U_{6,7}$, $U_{7,7}$;
	\item $2$  $4$  $4$  $5$  $6$  $6$  $6$  $6$  $6$
	\item $2$  $4$  $4$  $6$  $8$  $7$  $8$  $8$  $8$
	\item $2$  $4$  $4$  $6$  $6$  $6$  $6$  $6$  $6$
	\item $3$  $6$  $6$  $9$  $10$  $9$  $10$  $10$  $10$
	\item $2$  $4$  $4$  $6$  $7$  $8$  $8$  $8$  $8$
	\item $3$  $6$  $6$  $8$  $10$  $11$  $11$  $11$  $11$
\end{itemize}

\subsection*{ 5). $\Psi_{\mathrm{PGL}_3(2)}:\langle (1234567),(1367)(45) \rangle$ }
\begin{figure}[htp]
	\centering
	\includegraphics[scale=0.52]{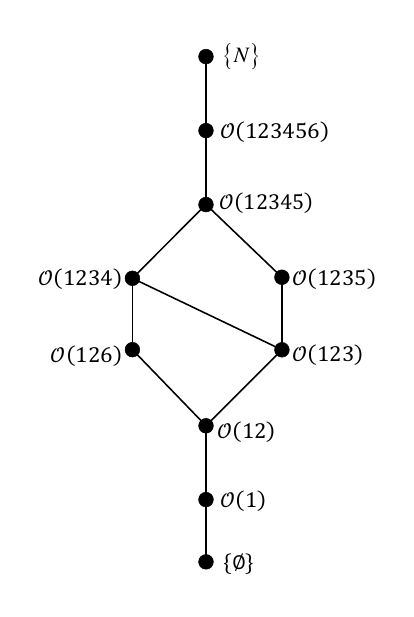}
	\captionsetup{font=small}
	\caption{$\textfrak{O}_{\mathrm{PGL}_3(2)}$}
	\label{fig10}
\end{figure}
The order of indices: $(\mathcal{O}(1)$,$\mathcal{O}(12)$,$\mathcal{O}(123)$,$\mathcal{O}(126)$,$\mathcal{O}(1234)$, $\mathcal{O}(1235)$,$\mathcal{O}(12345)$,$\mathcal{O}(123456)$,$\mathcal{O}(1234567))$.

H-representation:
\begin{itemize}
	\item $ -1$  $2$  $-1$  $0$  $0$  $0$  $0$  $0$  $0$
	\item $-1$  $2$  $0$  $-1$  $0$  $0$  $0$  $0$  $0$
	\item $0$  $-1$  $1$  $1$  $-1$  $0$  $0$  $0$  $0$
	\item $0$  $-1$  $2$  $0$  $-1$  $0$  $0$  $0$  $0$
	\item $0$  $-1$  $2$  $0$  $0$  $-1$  $0$  $0$  $0$
	\item $0$  $0$  $-1$  $0$  $1$  $1$  $-1$  $0$  $0$
	\item $0$  $0$  $-1$  $0$  $2$  $0$  $-1$  $0$  $0$
	\item $0$  $0$  $0$  $-1$  $2$  $0$  $-1$  $0$  $0$
	\item $0$  $0$  $0$  $0$  $-1$  $0$  $2$  $-1$  $0$
	\item $0$  $0$  $0$  $0$  $0$  $-1$  $2$  $-1$  $0$
	\item $0$  $0$  $0$  $0$  $0$  $0$  $-1$  $2$  $-1$
	\item $0$  $0$  $0$  $0$  $0$  $0$  $0$  $-1$  $1$
	\item $2$  $-1$  $0$  $0$  $0$  $0$  $0$  $0$  $0$
\end{itemize}

V-representation:
\begin{itemize}
	\item $U_{1,7}$, $U_{2,7}$, $U_{3,7}$, $U_{4,7}$,$U_{5,7}$, $U_{6,7}$, $U_{7,7}$;
	\item $2$  $4$  $5$  $6$  $6$  $5$  $6$  $6$  $6$
	\item $1$  $2$  $3$  $2$  $3$  $3$  $3$  $3$  $3$
	\item $1$  $2$  $3$  $3$  $4$  $3$  $4$  $4$  $4$
	\item $2$  $4$  $5$  $6$  $6$  $6$  $6$  $6$  $6$
	\item $2$  $4$  $6$  $5$  $7$  $8$  $8$  $8$  $8$
	\item $2$  $4$  $6$  $6$  $7$  $8$  $8$  $8$  $8$
\end{itemize}

\subsection*{ 6). $ \Psi_{S_1\times\mathrm{PSL}_2(5)}:\langle (234)(567),(25734) \rangle$ }
\begin{figure}[htp]
	\centering
	\includegraphics[scale=0.52]{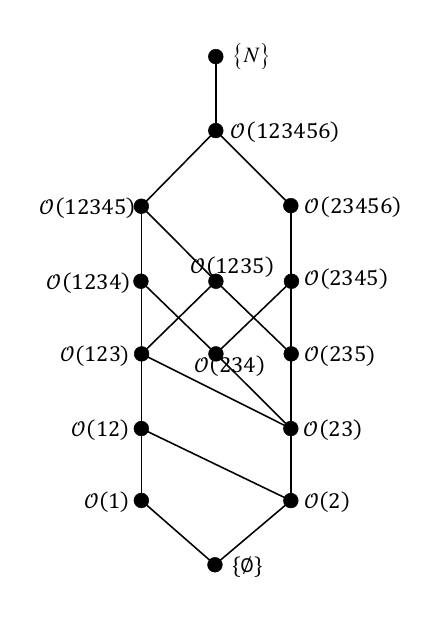}
	\captionsetup{font=small}
	\caption{$\textfrak{O}_{S_1\times\mathrm{PSL}_2(5)}$}
	\label{fig11}
\end{figure}
The order of indices: $(\mathcal{O}(1)$, $\mathcal{O}(2)$, $\mathcal{O}(12)$, $\mathcal{O}(23)$, $\mathcal{O}(123)$,
$\mathcal{O}(234)$, $\mathcal{O}(235)$, $\mathcal{O}(1234)$, $\mathcal{O}(1235)$, $\mathcal{O}(2345)$, $\mathcal{O}(12345)$,
$\mathcal{O}(23456)$,$\mathcal{O}(123456)$,$\mathcal{O}(1234567))$.

\begin{figure*}[b]
	\centering
	\includegraphics[width=0.95 \linewidth,height=0.6\linewidth]{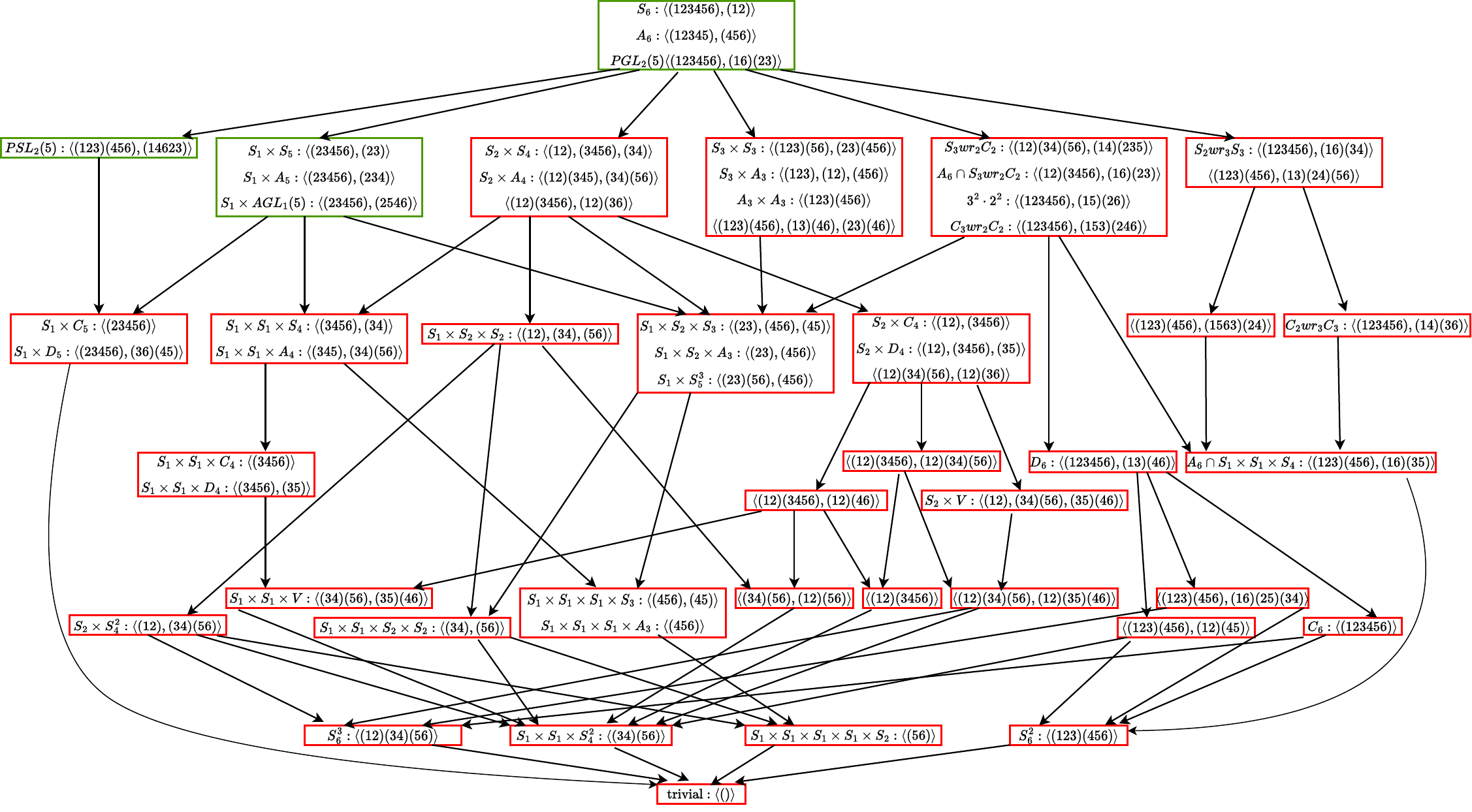}
	\captionsetup{font=small}
	\caption{The Hasse diagram of $P_6$}
	\label{fig4}
\end{figure*}
H-representation:
\begin{itemize}
	\item $-1$  $0$  $2$  $0$  $-1$  $0$  $0$  $0$  $0$  $0$  $0$  $0$  $0$  $0$
	\item $0$  $-1$  $0$  $2$  $0$  $-1$  $0$  $0$  $0$  $0$  $0$  $0$  $0$  $0$
	\item $0$  $-1$  $0$  $2$  $0$  $0$  $-1$  $0$  $0$  $0$  $0$  $0$  $0$  $0$
	\item $0$  $-1$  $1$  $1$  $-1$  $0$  $0$  $0$  $0$  $0$  $0$  $0$  $0$  $0$
	\item $0$  $0$  $-1$  $0$  $2$  $0$  $0$  $-1$  $0$  $0$  $0$  $0$  $0$  $0$
	\item $0$  $0$  $-1$  $0$  $2$  $0$  $0$  $0$  $-1$  $0$  $0$  $0$  $0$  $0$
	\item $0$  $0$  $0$  $-1$  $0$  $0$  $2$  $0$  $0$  $-1$  $0$  $0$  $0$  $0$
	\item $0$  $0$  $0$  $-1$  $0$  $1$  $1$  $0$  $0$  $-1$  $0$  $0$  $0$  $0$
	\item $0$  $0$  $0$  $-1$  $0$  $2$  $0$  $0$  $0$  $-1$  $0$  $0$  $0$  $0$
	\item $0$  $0$  $0$  $-1$  $1$  $0$  $1$  $0$  $-1$  $0$  $0$  $0$  $0$  $0$
	\item $0$  $0$  $0$  $-1$  $1$  $1$  $0$  $-1$  $0$  $0$  $0$  $0$  $0$  $0$
	\item $0$  $0$  $0$  $0$  $-1$  $0$  $0$  $0$  $2$  $0$  $-1$  $0$  $0$  $0$
	\item $0$  $0$  $0$  $0$  $-1$  $0$  $0$  $1$  $1$  $0$  $-1$  $0$  $0$  $0$
	\item $0$  $0$  $0$  $0$  $-1$  $0$  $0$  $2$  $0$  $0$  $-1$  $0$  $0$  $0$
	\item $0$  $0$  $0$  $0$  $0$  $-1$  $0$  $0$  $0$  $2$  $0$  $-1$  $0$  $0$
	\item $0$  $0$  $0$  $0$  $0$  $-1$  $0$  $1$  $0$  $1$  $-1$  $0$  $0$  $0$
	\item $0$  $0$  $0$  $0$  $0$  $0$  $-1$  $0$  $0$  $2$  $0$  $-1$  $0$  $0$
	\item $0$  $0$  $0$  $0$  $0$  $0$  $-1$  $0$  $1$  $1$  $-1$  $0$  $0$  $0$
	\item $0$  $0$  $0$  $0$  $0$  $0$  $0$  $-1$  $0$  $0$  $2$  $0$  $-1$  $0$
	\item $0$  $0$  $0$  $0$  $0$  $0$  $0$  $0$  $-1$  $0$  $2$  $0$  $-1$  $0$
	\item $0$  $0$  $0$  $0$  $0$  $0$  $0$  $0$  $0$  $-1$  $0$  $2$  $-1$  $0$
	\item $0$  $0$  $0$  $0$  $0$  $0$  $0$  $0$  $0$  $-1$  $1$  $1$  $-1$  $0$
	\item $0$  $0$  $0$  $0$  $0$  $0$  $0$  $0$  $0$  $0$  $-1$  $0$  $2$  $-1$
	\item $0$  $0$  $0$  $0$  $0$  $0$  $0$  $0$  $0$  $0$  $0$  $-1$  $2$  $-1$
	\item $0$  $0$  $0$  $0$  $0$  $0$  $0$  $0$  $0$  $0$  $0$  $0$  $-1$  $1$
	\item $0$  $2$  $0$  $-1$  $0$  $0$  $0$  $0$  $0$  $0$  $0$  $0$  $0$  $0$
	\item $1$  $1$  $-1$  $0$  $0$  $0$  $0$  $0$  $0$  $0$  $0$  $0$  $0$  $0$
\end{itemize}

V-representation:
\begin{itemize}
	\item $U_{1,7}$, $U_{2,7}$, $U_{3,7}$, $U_{4,7}$,$U_{5,7}$, $U_{6,7}$;
	\item $1$  $0$  $1$  $0$  $1$  $0$  $0$  $1$  $1$  $0$  $1$  $0$  $1$  $1$
	\item $0$  $1$  $1$  $2$  $2$  $3$  $3$  $3$  $3$  $4$  $4$  $5$  $5$  $6$
	\item $6$  $1$  $6$  $2$  $6$  $3$  $3$  $6$  $6$  $4$  $6$  $5$  $6$  $6$
	\item $2$  $1$  $2$  $2$  $2$  $2$  $2$  $2$  $2$  $2$  $2$  $2$  $2$  $2$
	\item $5$  $1$  $5$  $2$  $5$  $3$  $3$  $5$  $5$  $4$  $5$  $5$  $5$  $5$
	\item $3$  $1$  $3$  $2$  $3$  $3$  $3$  $3$  $3$  $3$  $3$  $3$  $3$  $3$
	\item $4$  $1$  $5$  $2$  $6$  $3$  $3$  $6$  $6$  $4$  $6$  $5$  $6$  $6$
	\item $0$  $1$  $1$  $2$  $2$  $2$  $2$  $2$  $2$  $2$  $2$  $2$  $2$  $2$
	\item $3$  $1$  $4$  $2$  $5$  $3$  $3$  $5$  $5$  $4$  $5$  $5$  $5$  $5$
	\item $5$  $1$  $6$  $2$  $6$  $3$  $3$  $6$  $6$  $4$  $6$  $5$  $6$  $6$
	\item $4$  $1$  $5$  $2$  $5$  $3$  $3$  $5$  $5$  $4$  $5$  $5$  $5$  $5$
	\item $0$  $1$  $1$  $1$  $1$  $1$  $1$  $1$  $1$  $1$  $1$  $1$  $1$  $1$
	\item $2$  $1$  $3$  $2$  $3$  $3$  $3$  $3$  $3$  $3$  $3$  $3$  $3$  $3$
	\item $3$  $1$  $4$  $2$  $5$  $3$  $3$  $6$  $6$  $4$  $6$  $5$  $6$  $6$
	\item $2$  $1$  $3$  $2$  $4$  $3$  $3$  $5$  $5$  $4$  $5$  $5$  $5$  $5$
	\item $0$  $1$  $1$  $2$  $2$  $3$  $3$  $3$  $3$  $3$  $3$  $3$  $3$  $3$
	\item $6$  $2$  $6$  $4$  $6$  $5$  $6$  $6$  $6$  $6$  $6$  $6$  $6$  $6$
	\item $4$  $1$  $4$  $2$  $4$  $3$  $3$  $4$  $4$  $4$  $4$  $4$  $4$  $4$
	\item $2$  $2$  $4$  $4$  $6$  $5$  $6$  $6$  $6$  $6$  $6$  $6$  $6$  $6$
	\item $2$  $1$  $3$  $2$  $4$  $3$  $3$  $4$  $4$  $4$  $4$  $4$  $4$  $4$
	\item $4$  $2$  $6$  $4$  $6$  $5$  $6$  $6$  $6$  $6$  $6$  $6$  $6$  $6$
	\item $3$  $1$  $4$  $2$  $4$  $3$  $3$  $4$  $4$  $4$  $4$  $4$  $4$  $4$
	\item $3$  $2$  $4$  $4$  $5$  $5$  $6$  $6$  $6$  $6$  $6$  $6$  $6$  $6$
	\item $1$  $2$  $3$  $4$  $5$  $5$  $6$  $6$  $6$  $6$  $6$  $6$  $6$  $6$
	\item $2$  $2$  $4$  $4$  $5$  $5$  $6$  $6$  $6$  $6$  $6$  $6$  $6$  $6$
	\item $6$  $3$  $8$  $6$  $10$  $8$  $9$  $12$  $11$  $10$  $12$  $11$  $12$  $12$
	\item $4$  $3$  $7$  $6$  $10$  $8$  $9$  $12$  $11$  $10$  $12$  $11$  $12$  $12$
	\item $5$  $3$  $8$  $6$  $10$  $8$  $9$  $12$  $11$  $10$  $12$  $11$  $12$  $12$
	\item $4$  $3$  $6$  $6$  $8$  $8$  $9$  $10$  $9$  $10$  $10$  $10$  $10$  $10$
	\item $2$  $2$  $4$  $4$  $6$  $6$  $6$  $8$  $7$  $8$  $8$  $8$  $8$  $8$
	\item $2$  $3$  $5$  $6$  $8$  $8$  $9$  $10$  $9$  $10$  $10$  $10$  $10$  $10$
	\item $3$  $3$  $6$  $6$  $8$  $8$  $9$  $10$  $9$  $10$  $10$  $10$  $10$  $10$
	\item $5$  $3$  $7$  $6$  $9$  $8$  $9$  $11$  $10$  $10$  $11$  $11$  $11$  $11$
	\item $3$  $3$  $6$  $6$  $9$  $8$  $9$  $11$  $10$  $10$  $11$  $11$  $11$  $11$
	\item $4$  $3$  $7$  $6$  $9$  $8$  $9$  $11$  $10$  $10$  $11$  $11$  $11$  $11$
	\item $6$  $2$  $8$  $4$  $10$  $6$  $6$  $11$  $12$  $8$  $12$  $10$  $12$  $12$
	\item $3$  $2$  $5$  $4$  $7$  $6$  $6$  $8$  $9$  $7$  $9$  $8$  $9$  $9$
	\item $4$  $2$  $6$  $4$  $8$  $6$  $6$  $9$  $10$  $8$  $10$  $10$  $10$  $10$
	\item $2$  $2$  $4$  $4$  $6$  $6$  $6$  $7$  $8$  $7$  $8$  $8$  $8$  $8$
	\item $0$  $2$  $2$  $4$  $4$  $5$  $6$  $5$  $6$  $6$  $6$  $6$  $6$  $6$
	\item $2$  $2$  $4$  $4$  $6$  $6$  $6$  $7$  $8$  $8$  $8$  $8$  $8$  $8$
	\item $0$  $1$  $1$  $2$  $2$  $3$  $3$  $3$  $3$  $4$  $4$  $4$  $4$  $4$
	\item $6$  $3$  $8$  $6$  $10$  $9$  $8$  $11$  $12$  $10$  $12$  $11$  $12$  $12$
	\item $4$  $3$  $7$  $6$  $10$  $9$  $8$  $11$  $12$  $10$  $12$  $11$  $12$  $12$
	\item $5$  $3$  $8$  $6$  $10$  $9$  $8$  $11$  $12$  $10$  $12$  $11$  $12$  $12$
	\item $4$  $3$  $6$  $6$  $8$  $9$  $8$  $9$  $10$  $10$  $10$  $10$  $10$  $10$
	\item $1$  $2$  $3$  $4$  $5$  $6$  $6$  $6$  $7$  $7$  $7$  $7$  $7$  $7$
	\item $2$  $3$  $5$  $6$  $8$  $9$  $8$  $9$  $10$  $10$  $10$  $10$  $10$  $10$
	\item $3$  $3$  $6$  $6$  $8$  $9$  $8$  $9$  $10$  $10$  $10$  $10$  $10$  $10$
	\item $5$  $3$  $7$  $6$  $9$  $9$  $8$  $10$  $11$  $10$  $11$  $11$  $11$  $11$
	\item $3$  $3$  $6$  $6$  $9$  $9$  $8$  $10$  $11$  $10$  $11$  $11$  $11$  $11$
	\item $4$  $3$  $7$  $6$  $9$  $9$  $8$  $10$  $11$  $10$  $11$  $11$  $11$  $11$
	\item $2$  $1$  $3$  $2$  $4$  $3$  $3$  $5$  $5$  $4$  $6$  $5$  $6$  $6$
	\item $6$  $2$  $6$  $4$  $6$  $6$  $5$  $6$  $6$  $6$  $6$  $6$  $6$  $6$
	\item $2$  $2$  $4$  $4$  $6$  $6$  $5$  $6$  $6$  $6$  $6$  $6$  $6$  $6$
	\item $4$  $2$  $6$  $4$  $6$  $6$  $5$  $6$  $6$  $6$  $6$  $6$  $6$  $6$
	\item $3$  $2$  $4$  $4$  $5$  $6$  $5$  $6$  $6$  $6$  $6$  $6$  $6$  $6$
	\item $1$  $2$  $3$  $4$  $5$  $6$  $5$  $6$  $6$  $6$  $6$  $6$  $6$  $6$
	\item $2$  $2$  $4$  $4$  $5$  $6$  $5$  $6$  $6$  $6$  $6$  $6$  $6$  $6$
	\item $3$  $2$  $5$  $4$  $7$  $6$  $6$  $9$  $8$  $7$  $9$  $8$  $9$  $9$
	\item $6$  $2$  $8$  $4$  $10$  $6$  $6$  $12$  $11$  $8$  $12$  $10$  $12$  $12$
	\item $2$  $2$  $4$  $4$  $6$  $6$  $6$  $8$  $7$  $7$  $8$  $8$  $8$  $8$
	\item $4$  $2$  $6$  $4$  $8$  $6$  $6$  $10$  $9$  $8$  $10$  $10$  $10$  $10$
	\item $0$  $2$  $2$  $4$  $4$  $6$  $5$  $6$  $5$  $6$  $6$  $6$  $6$  $6$
	\item $1$  $2$  $3$  $4$  $5$  $6$  $6$  $7$  $6$  $7$  $7$  $7$  $7$  $7$
	\item $0$  $1$  $1$  $2$  $2$  $3$  $3$  $3$  $3$  $4$  $4$  $5$  $5$  $5$
\end{itemize}
\subsection{The Hasse diagram of $P_6$ and $P_7$}
\begin{figure*}[htp]
	\centering
	\vspace{-1cm}
	\rotatebox{90}{\includegraphics[width=1.3\linewidth,height=0.95\linewidth]{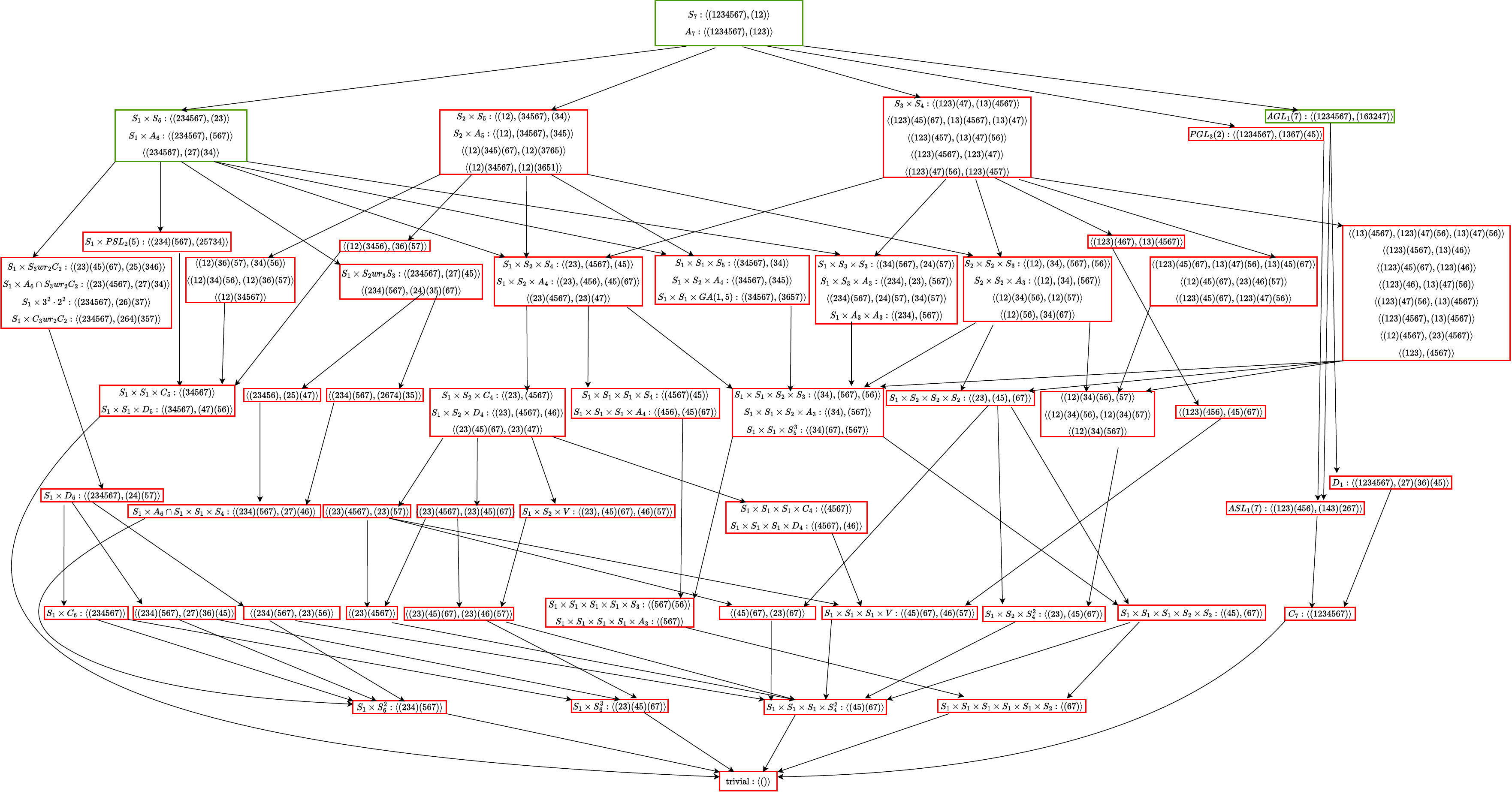}}
	\captionsetup{font=small}
	\caption{The Hasse diagram of $P_7$}
	\label{fig5}
\end{figure*}
\IEEEtriggeratref{4}

\end{document}